\newcommand{\remove}[1]{}
\newcommand{\name}[1]{\textsc{#1}}
\newcommand{\Oh}{\mathcal{O}}
\newcommand{\OhStar}{\mathcal{O}^{\star}}
\newcommand{\yes}{\textbf{yes}\xspace}
\newcommand{\no}{\textbf{no}\xspace}
\newcommand{\dHS}{\name{$d$-Hitting Set}\xspace}
\newcommand{\VC}{\name{Vertex Cover}\xspace}
\newcommand{\PLC}{\name{Point Line Cover}\xspace}
\newcommand{\LPC}{\name{Line Point Cover}\xspace}
\newcommand{\NPH}{$\mathsf{NP}$-hard\xspace}
\newcommand{\FPT}{\ensuremath{\mathsf{FPT}}\xspace}
\newcommand{\APXH}{$\mathsf{APX}$-hard\xspace}
\newcommand{\conjecture}{\ensuremath{\mathsf{coNP} \nsubseteq \mathsf{NP/poly}}\xspace}
\newcommand{\caveat}{\ensuremath{\mathsf{coNP} \subseteq \mathsf{NP/poly}}\xspace}
\newcommand{\A}{\mathcal{A}}
\renewcommand{\P}{\mathcal{P}}
\renewcommand{\L}{\mathcal{L}}
\newcommand{\Q}{\mathbb{Q}}
\newcommand{\Que}{\mathcal{Q}}
\newenvironment{parnamedefn}[4]{
\par\medskip\noindent\fbox{%
\begin{minipage}[t]{0.985\textwidth}%
\begin{tabular}{p{0.14\textwidth}p{0.80\textwidth}}
    \multicolumn{2}{l}{\name{#1}} \\
        \textsl{Input:} & {#2} \\ 
        \textsl{Question:} & {#3} \\
        \textsl{Parameter:} & {#4} \\%
   \end{tabular}
\end{minipage}}\par\addvspace{0.4\baselineskip}
}
\theoremstyle{plain} 
\newtheorem{theorem}{Theorem}
\newtheorem{corollary}{Corollary}
\newtheorem{lemma}{Lemma}
\newtheorem{claim}{Claim} 
\theoremstyle{remark} 
\newtheorem{remark}{Remark} 
\theoremstyle{definition} 
\newtheorem{definition}{Definition}
\newif\iffullpaper
\begin{document}

\title{Point Line Cover: The Easy Kernel is Essentially Tight}
\titlerunning{Point Line Cover: The Easy Kernel is Essentially Tight}
\author{Stefan Kratsch\inst{1}\thanks{Supported by the German Research Foundation (DFG), research project PREMOD, KR~4286/1.}
\and Geevarghese Philip\inst{2}\thanks{Supported by the
  Indo-German Max Planck Center for Computer Science(IMPECS).}
\and Saurabh Ray\inst{3}\thanks{Work done in part while at Max-Planck-Institut f\"{u}r Informatik, Saarbr\"{u}cken, Germany.}
}

\institute{
Technical University Berlin, Germany, \email{stefan.kratsch@tu-berlin.de}
\and
Max-Planck-Institut f\"{u}r Informatik,
Saarbr\"{u}cken, Germany, \email{gphilip@mpi-inf.mpg.de}
\and
Ben-Gurion University, Israel, \email{saurabh@bgu.ac.il}
}

\authorrunning{Kratsch, Philip, and Ray}

\clearpage

\maketitle
\begin{abstract}
  The input to the \NPH \PLC problem (PLC) consists of a set \(\P\) of
  \(n\)~points on the plane and a positive integer \(k\), and the
  question is whether there exists a set of at most \(k\) lines
  which pass through all points in \(\P\).  By straightforward
  reduction rules one can efficiently reduce any input to one with
  at most \(k^{2}\) points. We show that this easy reduction is
  already essentially tight under standard assumptions. More
  precisely, unless the polynomial hierarchy collapses to its
  third level, for any \(\varepsilon>0\), there is no polynomial-time algorithm that reduces
  every instance \((\P,k)\) of PLC to an equivalent instance with
  \(\Oh(k^{2-\varepsilon})\) points.
  This answers, in the negative, an open problem posed by
  Lokshtanov (PhD Thesis, 2009).

  Our proof uses the notion of a kernel from parameterized complexity, and the
  machinery for deriving lower bounds on the \emph{size} of kernels
  developed by Dell and
  van~Melkebeek (STOC 2010). It has two main ingredients: We first
  show, by reduction from \VC, that---unless the polynomial hierarchy
  collapses---PLC has no kernel of \emph{total size}
  \(\Oh(k^{2-\varepsilon})\) bits. This does not directly imply the claimed
  lower bound on the \emph{number of points}, since
  the best known polynomial-time encoding of a PLC instance with
  \(n\) points requires \(\omega(n^{2})\) bits. To get around this
  hurdle we build on work of Goodman, Pollack and Sturmfels (STOC
  1989) and devise an \emph{oracle communication protocol} of cost
  \(\Oh(n\log{}n)\) for PLC; its main building block is a bound of
  \(\Oh(n^{\Oh(n)})\) for the order types of $n$ points that are not 
  necessarily in general position and an explicit (albeit slow)
  algorithm that enumerates all possible order types of $n$ points. 
  This protocol, together with the
  lower bound on the total size (which also holds for such
  protocols), yields the stated lower bound on the number of
  points.
  
  While a number of essentially tight polynomial lower bounds on total sizes
  of kernels are known, our result is---to the best of our
  knowledge---the first to show a nontrivial lower bound for
  structural/secondary parameters.
\end{abstract}

\newpage
\section{Introduction}\label{sec:intro}
Recall that a point \((a,b)\) in the two-dimensional plane is said
to \emph{lie on} a line \(y=mx+c\) if and only if \(b=ma+c\)
holds.\footnote{Throughout this work we assume the implicit
  presence of an arbitrary but fixed Cartesian coordinate
  system.} In this case we also say that the line \emph{covers}---or
\emph{passes through}---the point, and also, symmetrically, that
the point \emph{covers} the line. The \PLC (PLC) problem of finding a
smallest number \(\ell\) of lines which cover a given set of \(n\)
points on the plane is motivated by various practical
applications~\cite{HassinMegiddo1991,MegiddoTamir1982}.  Megiddo
and Tamir~\cite{MegiddoTamir1982} showed that the problem is
\NPH. Kumar et al.~\cite{KumarAryaRamesh2000} and Brod{\'e}n et
al.~\cite{BrodenHammarNilsson2001} showed that the problem is
\APXH, and therefore cannot be approximated to within an
arbitrarily small constant factor in polynomial time, unless
\textsf{P=NP}. Grantson and
Levcopoulos~\cite{GrantsonLevcopoulos2006} devised an
\(\Oh(\log\ell)\)-factor approximation algorithm which runs in
polynomial time when \(\ell\in\Oh(n^{1/4})\).

In this work we study the complexity of the decision version
where, given a set~$\P$ of~$n$ points and an integer~$k$, the task
is to determine whether the points in~$\P$ can be covered by
\emph{at most~$k$ lines}. While there is no a priori relation
between~$n$ and~$k$, it is known that one can efficiently reduce
to the case where~$\P$ contains at most~$k^2$ points: First, any
line containing at least~$k+1$ points is \emph{mandatory} if we
want to use at most~$k$ lines in total. Indeed, the $k+1$ points on
such a line would otherwise require~$k+1$ separate lines since any
two lines share at most one point. Thus we may delete any such
line and all points which the line covers, and decrease~$k$ by one
without changing the outcome (\yes or \no). Second, if no line
contains more than~$k$ points, then~$k$ lines can cover at
most~$k^2$ points. Thus, if~$n>k^2$ then we return \no, and else
we have~$n\leq k^2$ as claimed. It is a natural and, in our
opinion, very interesting question whether this simple reduction
process can be improved to yield a significantly better reduction,
e.g., to~$k^{2-\varepsilon}$ points for some~$\varepsilon>0$. This
has been posed as an open problem by
Lokshtanov~\cite{LokshtanovThesis2009}.

Our main result is a negative answer to Lokshtanov's question.

\begin{theorem}\label{theorem:plc:pointlowerbound}
  Let~$\varepsilon>0$.  Unless \caveat, there is no
  polynomial-time algorithm that reduces every
  instance~$(\P,k)$ of \PLC to an equivalent instance
  with~$\Oh(k^{2-\varepsilon})$ points.\footnote{The assumption
    that \conjecture is backed up by the fact that \caveat is
    known to imply a collapse of the polynomial hierarchy to its
    third level~\cite{Yap1983}; it is therefore widely believed.}
\end{theorem}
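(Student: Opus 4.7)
The plan is a two-step argument using Dell and van Melkebeek's framework for kernel size lower bounds. I would first show, via a parameter-preserving reduction from \VC, that every \emph{oracle communication protocol} deciding \PLC must transmit $\Omega(k^{2-\varepsilon})$ bits in total, unless \caveat, and then design an explicit such protocol of cost only $\Oh(n \log n)$ bits. If a polynomial-time reduction to instances with $\Oh(k^{2-\varepsilon})$ points existed, composing it with the protocol would produce a protocol of total cost $\Oh(k^{2-\varepsilon} \log k) = \Oh(k^{2-\varepsilon/2})$ bits, contradicting the first bound. Because the Dell--van Melkebeek machinery produces lower bounds that apply to the more general model of oracle protocols, not just to polynomial kernels, this composition is sound.

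For the total bit-size lower bound, the plan is to reduce \VC to \PLC with the parameter preserved up to a constant factor. \VC is known to admit no kernel of bit-size $\Oh(k^{2-\varepsilon})$, so any linear parameter transformation into \PLC transfers the bound. The natural construction sends each vertex $v \in V(G)$ to a line $\ell_v$ in the plane and each edge $uv$ to the intersection point $\ell_u \cap \ell_v$, keeping the budget~$k$ unchanged; a set of $k$ vertices covers all edges of $G$ iff the $k$ corresponding lines cover all constructed points. The main obstacle is to prevent \emph{unintended} lines from shrinking the optimum line-cover below the optimum vertex-cover: one must place the $\ell_v$ in sufficiently generic position that no line other than an $\ell_v$ passes through three or more of the constructed intersection points. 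I would handle this by choosing slopes and intercepts to avoid a polynomial number of forbidden linear coincidences, so that the construction can still be carried out with polynomially bounded bit-lengths.

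For the protocol I would follow Goodman, Pollack, and Sturmfels and transmit only the \emph{order type} of $\P$, i.e.\ the combinatorial record of which triples of points are collinear and, if not, what their orientation is; this is enough information to determine the minimum line cover. The plan has two ingredients: (i)~a bound of $n^{\Oh(n)}$ on the number of order types of $n$ points \emph{without} assuming general position, which yields an $\Oh(n \log n)$-bit index; and (ii)~an explicit (possibly brute-force) algorithm enumerating these order types so that the verifier can recover a canonical realisation from the transmitted index and solve \PLC on it. I expect (i) to be the main technical hurdle: the classical $n^{\Oh(n)}$ bound of Goodman--Pollack is proved only for simple (general-position) order types, whereas \PLC instances are inherently riddled with collinearities, so I would extend the proof by allowing equalities in the sign vectors that encode an order type and carefully recounting.
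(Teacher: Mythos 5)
Your overall architecture is exactly right (protocol lower bound for \PLC via a linear-parameter reduction from \VC, plus an $\Oh(n\log n)$-cost oracle communication protocol via order types, composed against a hypothetical point-count reduction), but your reduction from \VC is unsound as described. Keeping the budget $k$ unchanged and demanding only that no line other than a vertex-line $\ell_v$ passes through three or more constructed points does not prevent cheating: a line through exactly \emph{two} constructed points covers two edge-points at the price of one line, and no genericity condition can forbid such lines, since any two points span a line. Concretely, take $G$ to be a disjoint union of $k+1$ edges with budget $k$: the vertex cover optimum is $k+1$, so $(G,k)$ is a \no instance, but your construction produces $k+1$ points with (by your own genericity requirement) no three collinear, which can be covered by $\lceil (k+1)/2\rceil\le k$ arbitrary lines, a \yes instance of \PLC. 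This is precisely why the paper does \emph{not} preserve $k$: it first doubles the graph (two copies of $G$ with all four copies of each edge, so that every vertex acquires a twin with the same neighbourhood, and $k\mapsto 2k$), reduces to the dual problem \LPC with genericity conditions guaranteeing that a ``bad'' cover point hits at most two lines, and then runs a nontrivial exchange argument (using the twins) showing that a minimum solution with fewest bad points has none. Some gadget of this kind, together with an exchange/soundness argument, is the missing idea; a linear blow-up of the parameter is harmless for the $\Oh(k^{2-\varepsilon})$ bound, so there is no need to insist on preserving $k$ exactly.

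There is a second, smaller gap in the protocol: you propose that the polynomial-time player transmit an $\Oh(n\log n)$-bit \emph{index} of her order type in the enumeration, but she cannot compute that index in polynomial time --- the enumeration is brute force, and no efficient numbering scheme for order types is known (this is exactly the obstruction that rules out a direct $\Oh(n\log n)$-bit encoding). The repair, which the paper uses, stays inside your framework: the computationally unbounded oracle enumerates all $n^{\Oh(n)}$ order types, sorts their sign-vector representations, and drives a binary search by sending candidate order types to the first player, who only answers lexicographic comparisons; since the protocol cost counts only bits sent \emph{from} the first player, this costs $\Oh(n\log n)$ while keeping her polynomial time. Your ingredient (i), extending the Goodman--Pollack count to degenerate configurations, is indeed the technical crux; the paper does it by encoding any degenerate order type on $n$ points into a general-position order type on $2n$ points (replacing each point by a short segment inside a small cell of a bounded-coordinate realisation obtained via Grigor'ev--Vorobjov), rather than by recounting sign vectors with equalities, and the same realisation-in-cells argument is what makes the enumeration algorithm work; ``allowing equalities and recounting'' alone does not obviously yield either the $n^{\Oh(n)}$ bound or decidability of realisability.
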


The first part of our result is a lower bound
of~$\Oh(k^{2-\varepsilon})$ on the \emph{size} of a \PLC instance
which can be obtained by efficient (reduction) algorithms. 
For this we use
established machinery from parameterized complexity (formal
definitions and tools are given in
Section~\ref{sec:preliminaries}).

\begin{theorem}\label{theorem:plc:sizebound}
  Let~$\varepsilon>0$. The \PLC problem has no kernel of size
  \(\Oh(k^{2-\varepsilon})\), i.e., no polynomial-time reduction
  to equivalent instances of size~$\Oh(k^{2-\varepsilon})$, unless
  \caveat.
\end{theorem}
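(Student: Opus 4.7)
The plan is to give a polynomial parameter transformation (PPT) from \VC to \PLC and to transfer the kernel size lower bound for \VC due to Dell and van~Melkebeek. Their result states that, under \caveat, \VC does not admit a kernel of size $\Oh(k^{2-\varepsilon})$ bits for any $\varepsilon>0$; a PPT with linear parameter blow-up preserves such bit-size lower bounds, so Theorem~\ref{theorem:plc:sizebound} reduces to the construction of a PPT from \VC to \PLC that outputs an instance of polynomial bit-length while keeping the parameter at $k$ (or at $\Oh(k)$).

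Given a \VC instance $(G,k)$ with $|V(G)|=n$, the reduction will produce a \PLC instance $(\P,k)$ in polynomial time. The natural starting point is to associate each vertex $v_i$ with a line $L_i$ in the plane, chosen so that the $n$ lines lie in a sufficiently generic configuration and have integer coordinates of length polynomial in $n$, and to place, for each edge $(v_i,v_j)\in E(G)$, a point at the intersection $L_i\cap L_j$---possibly supplemented by a small constant number of auxiliary gadget points placed near that intersection. With this setup the forward direction is immediate: any vertex cover $C\subseteq V(G)$ of size at most $k$ yields the line cover $\{L_i:v_i\in C\}$ of $\P$ of the same size.

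The main obstacle lies in the reverse direction. Because any two distinct points in the plane are collinear, a line through two intersection points $P_{ij}$ and $P_{kl}$ with disjoint index sets $\{i,j\}\cap\{k,l\}=\emptyset$ is ``spurious'': it is not one of the $L_m$'s, yet it covers two points of~$\P$. In the naive one-point-per-edge encoding such spurious lines can make the line cover number strictly smaller than the vertex cover number---for instance, when $G$ is a perfect matching on $2t$ vertices the point set consists of $t$ pairwise disjoint-index intersections, and in general position roughly $t/2$ lines suffice to cover them, whereas $\VC(G)=t$. Designing the local gadget so that any line cover of $\P$ can be rewritten, without enlargement, as a subfamily of $\{L_1,\dots,L_n\}$---from which a vertex cover of $G$ can be read off directly---is the heart of the proof and the step I expect to be the most delicate, since the gadgets must simultaneously block every spurious line from being more efficient than the canonical $L_i$-cover, preserve the linear parameter bound, and keep the total bit-length polynomial.

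Once such a PPT is in place, Theorem~\ref{theorem:plc:sizebound} follows by a standard composition argument: a hypothetical polynomial-time reduction of every \PLC instance to an equivalent one of size $\Oh(k^{2-\varepsilon})$ bits, composed with the PPT, would yield a kernelization of \VC of the same asymptotic size, contradicting the Dell--van~Melkebeek lower bound and hence forcing \caveat.
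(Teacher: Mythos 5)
Your high-level strategy is the same as the paper's (transfer the Dell--van~Melkebeek lower bound for \VC through a polynomial-time reduction with only linear parameter increase; the paper does this via \autoref{theorem:plc:oraclecostbound}, phrasing the transfer in terms of oracle communication protocols, which also absorbs the minor technicality that composing your PPT with a hypothetical \PLC kernel yields a compression rather than literally a \VC kernel). But the proposal stops exactly where the actual proof begins: you correctly identify that spurious lines through intersection points with disjoint index pairs can beat the canonical cover $\{L_i : v_i\in C\}$ (your matching example is precisely the failure mode), and then defer the fix to an unspecified ``small constant number of auxiliary gadget points placed near that intersection.'' That is the entire content of the theorem, and it is not clear such local gadgets can work at all: the gadget points must themselves be covered without inflating the parameter beyond $\Oh(k)$, points placed near but not on $L_i,L_j$ break even the forward direction, and gadget points belonging to different edges can create fresh spurious lines among themselves, so ``blocking'' is not a local matter.

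The paper's solution is global rather than gadget-based, and works in the dual (\VC $\to$ \LPC, then point--line duality via \autoref{lem:lpc_to_plc}). First, the graph is doubled: two copies $G_0,G_1$ of $G$ with all four edges $\{u_i,v_j\}$ for each original edge, so that $G'$ has a vertex cover of size $2k$ iff $G$ has one of size $k$ (\autoref{claim:phase1}) and every vertex has a twin with identical neighborhood. Second, the $2n$ vertices are mapped to a polynomial-time constructible point set (\autoref{lemma:special_point_set_construction}) that is in general position, has no two defined lines parallel, and no three defined lines concurrent outside the set; consequently any ``bad'' cover element (a point not corresponding to a vertex) covers at most two of the constraint lines. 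Finally, an exchange argument on a minimum solution with the fewest bad elements (using the twin structure $N_{G'}(v_0)=N_{G'}(v_1)$ and \autoref{claim:phase2}) shows that bad elements can be eliminated entirely, so an optimal cover can be read off as a vertex cover of $G'$. Without an analogue of this doubling-plus-genericity-plus-exchange argument (or a concrete, verified gadget), your reverse direction is unproven, so the proposal has a genuine gap at its core step.
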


To get the much more interesting lower bound of
\autoref{theorem:plc:pointlowerbound} on the number of points we
have to relate the number of points in an instance of PLC to the
size of the instance, as tightly as possible. The catch here is
that there is \emph{no known efficient and sufficiently tight}
encoding of PLC-instances with~$n$ points into a small---in terms
of~$n$---number of bits~\cite[Section 1]{GPS89}. Further, known
tools from parameterized complexity only provide lower bounds on
total \emph{sizes} of kernels. We throw more light on this
distinction in our brief discussion on \VC later in this section.

Getting around this hurdle is the second component of our result:
It is known that point sets \emph{in general position}, i.e., with
no three points sharing a line, can be partitioned
into~$2^{\Oh(n\log{}n)}$ equivalence classes (later, and formally,
called \emph{order types}) in a combinatorial sense; we extend
this fact to arbitrary sets of~$n$ points (\autoref{lem:otub}).
The answer (\yes or \no) to a PLC instance is determined by the
order type of points in the instance. Thus, morally, the relevant
information expressed by the coordinates of the~$n$ points is
only~$\Oh(n\log n)$ bits. However, no efficient numbering scheme
for order types is known, and hence we do not know how to
efficiently compute such a representation with~$\Oh(n\log n)$
bits. Our solution for this is as follows: First, we prove that
order types of point sets are decidable, and devise a
computable---albeit inefficient---enumeration scheme, which we
believe to be of independent interest.

\begin{lemma}\label{lem:enumerate_order_types}
  There exists an algorithm which enumerates, for each
  \(n\in\mathbb{N}\), all order types 
  defined by \(n\) points in the plane.
\end{lemma}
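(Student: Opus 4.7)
The plan is to reduce the enumeration of order types to deciding realizability of candidate sign functions, and then invoke the decidability of the first-order theory of the reals. Formally, identify the order type of a set $\P = \{p_1,\ldots,p_n\} \subseteq \mathbb{R}^2$ with the map $\chi_{\P}$ that assigns to each triple $i<j<k$ the sign in $\{-1,0,+1\}$ of the $3\times 3$ determinant whose rows are $(1,x_i,y_i)$, $(1,x_j,y_j)$, $(1,x_k,y_k)$, where $p_\ell=(x_\ell,y_\ell)$; two point sets have the same order type precisely when they induce the same sign function. Since there are at most $3^{\binom{n}{3}}$ such candidate maps, it suffices to decide, for each fixed $\chi\colon\binom{[n]}{3}\to\{-1,0,+1\}$, whether some configuration of $n$ points in $\mathbb{R}^2$ realizes it, and to output those that do.

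The realizability of a fixed $\chi$ can be written as a sentence in the existential theory of the reals. Introduce $2n$ real variables $x_1,y_1,\ldots,x_n,y_n$, and for each triple $i<j<k$ form the determinant polynomial $D_{ijk}$ in these variables. The realizability of $\chi$ is equivalent to the truth of $\exists x_1\cdots\exists y_n\,\bigwedge_{i<j<k} \sigma_{ijk}$, where $\sigma_{ijk}$ is one of $D_{ijk}>0$, $D_{ijk}<0$, or $D_{ijk}=0$ according to whether $\chi(i,j,k)$ equals $+1$, $-1$, or $0$. This is a Boolean combination of polynomial (in)equalities with integer coefficients, quantified existentially over $\mathbb{R}^{2n}$.

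By the Tarski--Seidenberg theorem, the first-order theory of the real closed field $\mathbb{R}$ is decidable, so the truth of each such sentence can be determined by a (highly inefficient but effective) algorithm. The enumeration procedure is then immediate: iterate over all $3^{\binom{n}{3}}$ candidate maps $\chi$, invoke the decision procedure on the corresponding sentence, and output exactly those $\chi$ that pass. The main conceptual point---and the only place where some care is needed---is to define order types so that they retain the information about \emph{collinear} triples (the value~$0$), since the \PLC problem and the subsequent upper bound on the number of order types must handle arbitrary point sets rather than only those in general position. Beyond this, the argument is a direct application of Tarski--Seidenberg and presents no substantial obstacle; indeed the challenge---if any---lies not in proving existence of the enumeration but in the fact that no efficient such procedure is known, which is exactly what motivates the use of an \emph{oracle} communication protocol in the sequel.
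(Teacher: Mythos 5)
Your proof is correct, but it takes a genuinely different route from the paper's. You enumerate all $3^{\binom{n}{3}}$ candidate sign functions and, for each one, decide realizability by writing it as a sentence in the existential theory of the reals (a conjunction of sign conditions on the determinant polynomials $D_{ijk}$) and invoking Tarski--Seidenberg decidability. The paper instead gives a more explicit construction: following Goodman, Pollack and Sturmfels, it shows that every realizable order type---including degenerate ones---is achieved by an $n$-tuple of \emph{grid cells} of side length one with integer corner coordinates of magnitude $2^{2^{\Oh(n)}}$, so enumeration reduces to iterating over $n$-subsets of cells in a fixed doubly-exponential grid. The key technical step in the paper is the passage from points to cells, which is needed to preserve collinearity relations under the perturbation that rounds a Grigor'ev--Vorobjov solution to bounded precision; your appeal to Tarski--Seidenberg sidesteps this difficulty entirely since equality constraints $D_{ijk}=0$ are handled by the decision procedure with no special care. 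Your argument is thus shorter and arguably cleaner as a proof of this lemma in isolation. What the paper's heavier machinery buys is that the same cell construction is reused to prove \autoref{lem:otub}, the $n^{\Oh(n)}$ bound on the number of order types of $n$ (not necessarily general-position) points, which in turn is what makes the oracle communication protocol of \autoref{lemma:plc:oracleprotocol} cost $\Oh(n\log n)$ rather than $\Oh(n^3)$; your enumeration alone only yields the trivial $3^{\binom{n}{3}}$ upper bound. Since \autoref{lem:otub} is stated and proved separately, this does not affect the correctness of your argument for the lemma at hand.
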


We then use the fact that the size lower bound of
\autoref{theorem:plc:sizebound} for \PLC also holds in the setting
of a two-player communication protocol where the first player
holds the input instance but can only communicate a bounded number
of bits to an all-powerful oracle that will solve the instance
(see \autoref{sec:preliminaries} for definitions). Using our
enumeration scheme, the oracle can query the first player about
her instance and, effectively, perform a binary search for the
order type of the input. This kind of an oracle communication
protocol was introduced by Dell and
van~Melkebeek~\cite{DellvanMelkebeek2010}, who developed lower
bound techniques for such protocols; they also used such an
\emph{active} oracle to answer a question about sparse languages.

\paragraph{Related work.} Langerman and
Morin~\cite{LangermanMorin2005} showed that \PLC is
\emph{fixed-parameter tractable} (\FPT) with respect to the target
number~$k$ of lines by giving an algorithm that runs in time
\(\OhStar(k^{2k})\); we use~$\OhStar$-notation to hide factors
polynomial in the input size. Later, Grantson and
Levcopoulos~\cite{GrantsonLevcopoulos2006} proposed a faster
algorithm which solves the problem in time
\(\OhStar((k/2.2)^{2k})\). The fastest \FPT algorithm currently
known for the problem is due to Wang et al.~\cite{WangLiChen2010},
and it solves the problem in time \(\OhStar((k/1.35)^{k})\).

Langerman and Morin~\cite{LangermanMorin2005} also gave the
reduction to at most $k^2$ points that we outlined earlier.  In
terms of parameterized complexity this can be seen to lead to a
\emph{polynomial kernelization}, using a few standard arguments:
Clearly, we are effectively asking for a set cover for the point
set (where sets are given by the lines). There are at most $k^4$
lines and we can encode the points contained in each line using at
most $k^2$ bits; this uses \(\Oh(k^{6})\) bits. Since \textsc{Set
  Cover} is in \textsf{NP}, there is a Karp reduction back to
\PLC, which gives an equivalent PLC instance of size polynomial
in~$k^6$, i.e., polynomial in~$k$.  Estivill-Castro et
al.~\cite{Estivill-CastroHeednacramSuraweera2009} describe a
couple of additional reduction rules which, while not improving
beyond~$\Oh(k^2)$ points, yield kernels which seem to be better
amenable to faster subsequent processing by algorithms. See also
the PhD thesis of Heednacram~\cite[Chapter 2]{Heednacram2010} for
a survey of related work.

A number of papers build on the work of Dell and van
Melkebeek~\cite{DellvanMelkebeek2010} to prove concrete polynomial
lower bounds for the \emph{size} of kernelizations for certain
problems, e.g.,~\cite{DellM12,HermelinW12,CyganGH13_arxiv}; to the
best of our knowledge none of them obtain tight bounds of
secondary parameters (other than the number of edges, which is
usually an immediate consequence).  For other recent developments
in kernelization we refer to the survey of Lokshtanov et
al.~\cite{LokshtanovMS12}.

\paragraph{Vertex cover.} 
To shed more light on the distinction between lower bounds on
kernel \emph{sizes} and lower bounds on other, secondary or
structural, parameters, let us briefly recall what is known for
the well-studied \VC problem: It is known how to efficiently
reduce \VC input instances to equivalent instances with $\Oh(k)$
vertices and $\Oh(k^2)$ edges~\cite{ChenKJ01}, and hence to size
$\Oh(k^2)$ by adjacency matrix encoding. The breakthrough work of
Dell and van Melkebeek \cite{DellvanMelkebeek2010} showed that no
reduction to size $\Oh(k^{2-\varepsilon})$ is possible for any
$\varepsilon>0$, unless \caveat. Since graphs with $m$ edges (and
no isolated vertices) can be represented with $\Oh(m\log m)$ bits
this also implies that no reduction to $\Oh(k^{2-\varepsilon})$
\emph{edges} is possible. Similarly, we cannot get to
$\Oh(k^{1-\varepsilon})$ vertices since that would allow an
adjacency matrix encoding of size
$\Oh(k^{2-2\varepsilon})$. Contrast this with \dHS, i.e., \VC on
hypergraphs with edges of size $d$, for which the best known
reduction achieves $\Oh(k^{d-1})$ vertices and $\Oh(k^d)$
hyperedges \cite{Abu-Khzam10}; Dell and van Melkebeek
\cite{DellvanMelkebeek2010} ruled out size
$\Oh(k^{d-\varepsilon})$. Thus the bound on the number of edges is
essentially tight, but the implied lower bound on the number of
vertices is much weaker, namely $\Oh(k^{1-\varepsilon})$. The
takeaway message is that lower bounds for secondary parameters
can, so far, be only concluded from their effect on the instance
size; there are no ``direct'' lower bound proofs.

\paragraph{Organization of the rest of the paper}
In the next section we state various definitions and preliminary
results.  In \autoref{sec:order_types} we prove an upper bound of
\(n^{\Oh(n)}\) on the number of combinatorially distinct order
types of \(n\) points in the plane, and show that there is an
algorithm which enumerates all order types of \(n\) points,
thereby proving \autoref{lem:enumerate_order_types}. In
\autoref{sec:reduction} we present our reduction from \VC to \PLC
which proves a more general claim than
\autoref{theorem:plc:sizebound}. We describe an oracle
communication protocol of cost \(\Oh(n\log{}n)\) for \PLC in
\autoref{section:pointlowerbound}, and this yields our main
result, \autoref{theorem:plc:pointlowerbound}). We conclude in
\autoref{sec:conclusion}.

\section{Preliminaries}\label{sec:preliminaries}
We use $[n]$ to denote the set \(\{1,2,\ldots,n\}\). Throughout
the paper we assume the presence of an arbitrary but fixed
Cartesian coordinate system. All geometric objects are referenced
in the context of this coordinate system. We use $p_x$ and $p_y$
to denote the $x$ and $y$ coordinates, respectively, of a point
$p$. A set of points in the plane is said to be \emph{in general
  position} if no three of them are collinear; a set of points
which is not in general position is said to be
\emph{degenerate}. For two points \(p\neq{}q\) in the plane, we
use \(\overline{pq}\) to denote the unique line in the plane which
passes through \(p\) and \(q\); we say that the line
\(\overline{pq}\) is \emph{defined by} the pair \(p,q\).

Let $a,b,c$ be three points in the plane. We say that the
\emph{orientation} of the ordered triple $\langle a,b,c \rangle$
is $+1$ if the points lie in counter-clockwise position, $-1$ if
they lie in clockwise position and $0$ if they are collinear.
Formally, let
\[M(\langle a,b,c \rangle) = \left(%
\begin{array}{ccc}%
1 & a_x & a_y\\%
1 & b_x &b_y\\%
1 & c_x & c_y%
\end{array}%
\right).%
\]
Then,
\(orientation(\langle{}a,b,c\rangle)=sgn\,det\,M(\langle{}a,b,c\rangle)\)
where $sgn$ is the sign function and $det$ is the determinant
function. Note that the determinant above is zero if and only if
the rows are linearly dependent in which case, without loss of
generality, $\langle 1, a_x, a_y \rangle = \lambda \langle 1, b_x,
b_y\rangle + \mu \langle 1, c_x, c_y\rangle$. Comparing the first
coordinates on both sides of the inequality we see that $\mu =
1-\lambda$ which is equivalent to saying that one of the points is
a convex combination of the other two. Hence
\(orientation(\langle{}a,b,c\rangle) \) is zero exactly when $a$,
$b$, and $c$ are collinear.

Let \(\P=\langle{}p_{1},\cdots,p_{n}\rangle\) be an ordered set of
points, where $p_i=(x_i, y_i)=\P[i]$.  Denote by $\binom{[n]}{3}$
the set of ordered triples $\langle i,j,k \rangle$ where $i<j<k$
and $i,j,k \in [n]$. Define $\sigma: \binom{[n]}{3} \mapsto \{
+1,0,-1\}$ to be the function $\sigma (\langle i,j,k \rangle) =
orientation(p_i,p_j,p_k)$.  The function $\sigma$ is called the
\emph{order type} of $\P$. Observe that the order type of a point
set depends on the order of points and not just on the set of
points. Two point sets \(\P,\Que\) of the same size \(n\) are said
to be \emph{combinatorially equivalent} if there exist orderings
\(\P'\) of \(\P\) and \(\Que'\) of \(\Que\) such that the order
types of \(\P'\) and \(\Que'\)---which are both functions of type
\(\binom{[n]}{3}\mapsto\{+1,0,-1\}\)---are identical. Otherwise we
say that \(\P\) and \(\Que\) are \emph{combinatorially
  distinct}. If two order types come from combinatorially
equivalent (distinct) point sets, we call the order types
combinatorially equivalent (distinct).

It is not difficult to see that combinatorial distinction is a
correct criterion for telling non-equivalent instances of \PLC
apart.

\begin{lemma}\label{lemma:comb_equiv_plc}
  Let \((\P,k),(\Que,k)\) be two instances of \PLC. If the point
  sets \(\P,\Que\) are combinatorially equivalent, then \((\P,k)\)
  and \((\Que,k)\) are equivalent instances of \PLC.
\end{lemma}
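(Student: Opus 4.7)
The plan is to transport a line cover from one instance to the other via the index bijection induced by the common ordering. By assumption, I can fix orderings $\P=\langle p_1,\ldots,p_n\rangle$ and $\Que=\langle q_1,\ldots,q_n\rangle$ whose order types agree as a single function $\sigma\colon\binom{[n]}{3}\to\{+1,0,-1\}$; this yields the natural bijection $p_i\leftrightarrow q_i$.

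The key auxiliary claim I would prove first is: for every $S\subseteq[n]$, the points $\{p_i:i\in S\}$ are collinear iff $\{q_i:i\in S\}$ are collinear. The cases $|S|\le 2$ are trivial. For $|S|\ge 3$, I fix two indices $i,j\in S$; because $\P$ and $\Que$ consist of distinct points, $\overline{p_ip_j}$ and $\overline{q_iq_j}$ are each well defined. The set $\{p_i:i\in S\}$ lies on $\overline{p_ip_j}$ iff $\sigma$ evaluates to $0$ on every triple drawn from $S$ that contains $i$ and $j$ (the order type records zero orientation precisely when the three points are collinear, and this is invariant under the canonical sorting required by the domain $\binom{[n]}{3}$), which by equality of the two order types is equivalent to $\{q_i:i\in S\}$ lying on $\overline{q_iq_j}$.

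Given this claim, suppose $(\P,k)$ is a yes-instance witnessed by lines $L_1,\ldots,L_k$. I assign each $i\in[n]$ to some $j(i)\in[k]$ with $p_i\in L_{j(i)}$; this partitions $[n]$ into (possibly empty) classes $S_1,\ldots,S_k$. Each set $\{p_i:i\in S_j\}$ is collinear by construction, so by the claim $\{q_i:i\in S_j\}$ is collinear in $\Que$, and I can pick a line $L'_j$ through it (any line if $|S_j|\le 1$, and the unique one through two distinct points otherwise). The lines $L'_1,\ldots,L'_k$ then cover all of $\Que$, so $(\Que,k)$ is a yes-instance. Exchanging the roles of $\P$ and $\Que$ yields the converse direction. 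The argument is essentially a direct unpacking of the definitions, and I do not foresee any substantive obstacle; the only minor care needed is handling classes with at most two points, for which any line through them works.
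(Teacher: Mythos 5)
Your proposal is correct and follows essentially the same route as the paper: use the index bijection coming from the common order type, show that collinearity of subsets is preserved (zero orientation of triples transfers), and then map a line cover across, with symmetry giving the converse. The only cosmetic difference is that you partition the points among the covering lines and allow classes of size at most one (choosing an arbitrary line for them), whereas the paper instead assumes without loss of generality that each line covers at least two points and transports the full covered sets $S_\ell$.
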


\begin{proof}
  Let \(\P,\Que\) be combinatorially equivalent, let
  \(|\P|=|\Que|=n\), and let \(\P',\Que'\) be orderings of \(\P\)
  and \(\Que\), respectively, with identical order types. Observe
  first that the combinatorial equivalence provides us with a
  natural bijection \(\pi:\P\mapsto\Que\), defined as follows. Let
  \(p\in\P\), and let \(i\in[n]\) be such that \(\P'[i]=p\). Then
  \(\pi(p)=\Que'[i]\).

  For any subset $T \subseteq \P $, let $\pi(T)$ denote the set
  $\{\pi(t) : t\in T\}$. Let $S\subseteq \P$ be collinear.  For
  any triple $a,b,c \in \pi(S)$, $orientation(\langle a, b,
  c\rangle) = $ $orientation(\langle \pi^{-1}(a), \pi^{-1}(b),
  \pi^{-1}(c)\rangle)$ $ = 0$ where the first equality follows
  from the combinatorial equivalence of $\P$ and $\Que$ and the
  second equality follows from the collinearity of every triple of
  points in $S$. This implies that every triple of points in
  $\pi(S)$ are collinear, which is equivalent to saying that
  $\pi(S)$ is a collinear subset of $\Que$. Similarly, since $\pi$
  is a bijection, if $\pi(S)$ is collinear for some $S \subseteq
  \P $ then $S$ is also collinear.  Thus, $S$ is a collinear
  subset of $\P$ if and only if $\pi(S)$ is a collinear subset of
  $\Que$.

  Let \((\P,k)\) be a \yes instance, and let \(\L\) be a set of at
  most \(k\) lines which cover all points in \(\P\). Without loss
  of generality, each of the lines in $\L$ passes through at least
  two points in $\P$ since we can always replace a line through a
  single point by a line through two or more points.  For each
  $\ell \in \L$, denote by $S_\ell$ the subset of points of $\P$
  that $\ell$ covers. Since $S_\ell$ is collinear, so is
  $\pi(S_\ell)$ and thus we can define $\ell'$ to be the line
  through $\pi(S_\ell)$. Then, $\L' = \{\ell' : \ell \in \L\}$
  covers $\Que$ since for every $q \in \Que$ \, there is line
  $\ell\in\L$ that covers $\pi^{-1}(q)$. This implies that $(\Que,
  k)$ is a \yes instance. Again, since $\pi$ is a bijection, we
  have that if $(\Que, k)$ is a \yes instance then $(\P,k)$ is a
  \yes instance. Thus, $(\P, k)$ is a \yes instance if and only if
  $(\Que,k)$ is a \yes instance.
\end{proof}

\begin{remark}
  For the previous lemma, it is not important that a triple in
  $\P$ has the same orientation as the corresponding triple (under
  the bijection $\pi$) in $\Que$. It is sufficient to ensure that
  a triple in $\P$ has orientation $0$ iff the corresponding
  triple has orientation $0$. We could define and use a coarser
  notion of equivalence based on this but we choose to stick to
  order types since they are well studied.
\end{remark}
We need the following straightforward polynomial-time construction
of point sets with some special properties for our reduction in
\autoref{sec:reduction}:

\begin{lemma}\label{lemma:special_point_set_construction}
  For any positive integer \(n\) we can construct, in time
  polynomial in \(n\), a set \(\P=\{p_{1},\ldots,p_{n}\}\) of
  \(n\) points with the following properties:
  \begin{enumerate}
  \item The set of points \(\P\) is in general position;
    \item No two lines defined by pairs of points in $\P$ are
      parallel; and,
    \item No three lines defined by pairs of points in $\P$ pass
      through a common point \emph{outside} $\P$.
  \end{enumerate}
\end{lemma}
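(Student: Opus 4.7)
The plan is an inductive construction: place the points $p_1, p_2, \ldots, p_n$ one at a time, maintaining all three properties. Suppose $p_1, \ldots, p_{i-1}$ are already placed and satisfy the properties. I will show that the set of locations where $p_i$ would violate some property is contained in a finite union of lines, and then argue that a valid choice of $p_i$ with small rational coordinates can be found in polynomial time.

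I classify the ``bad'' positions for $p_i$ by property. For property~1, $p_i$ must miss each of the $\binom{i-1}{2}$ lines $\overline{p_jp_k}$ with $j<k<i$. For property~2, the only new parallelism that can occur involves a new line $\overline{p_jp_i}$ and an old line $\overline{p_kp_l}$ (two new lines both contain $p_i$ and so cannot be parallel): fixing $p_j$ and an old line, the set of $p_i$ that would cause parallelism is the line through $p_j$ of the prescribed slope, giving $O(i^3)$ forbidden lines. For property~3, I split into subcases according to how many of the three concurrent lines pass through $p_i$. If three or two pass through $p_i$, the common intersection is $p_i\in\P$, which is permitted. If one new line $\overline{p_jp_i}$ is concurrent with two old lines meeting at a point $q\notin\P$, I must have $p_i\notin\overline{p_jq}$; there are $O(i^4)$ such intersection points $q$ and $O(i)$ choices of $p_j$, contributing $O(i^5)$ forbidden lines. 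The case of three old lines is already handled by induction.

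Altogether the forbidden set for $p_i$ is contained in a union of at most $O(n^5)$ lines with rational coefficients of polynomially bounded bit-complexity. To realize $p_i$ effectively, I restrict attention to integer points in an $N\times N$ grid with $N$ a sufficiently large polynomial in $n$ (for instance, $N=n^6$). Each forbidden line meets this grid in at most $N$ points, so the total number of forbidden grid points is $O(n^5 N)=o(N^2)$; hence many valid grid points exist and one can be located by enumerating candidates and testing each against the list of constraints. The coordinates of $p_i$ remain of polynomial bit-length, so the overall procedure runs in time polynomial in $n$.

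The main conceptual point---and the only one requiring care---is the bookkeeping for property~3 and the observation that intersections at a point of $\P$ are harmless, which is what allows us to restrict attention to polynomially many intersection points $q$ instead of an uncontrolled set. No deep geometric tool is needed; everything else reduces to routine algebraic checks and the fact that a finite union of lines in the plane is a proper, measure-zero subset.
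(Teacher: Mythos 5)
Your proposal is correct and follows essentially the same route as the paper: an incremental construction in which the forbidden positions for the next point lie on $\Oh(n^5)$ lines, and a valid point is then found by exhaustive search over an $n^6\times n^6$ integer grid. Your more detailed case analysis for properties~2 and~3 (in particular, noting that concurrences at $p_i\in\P$ are harmless) simply fills in bookkeeping the paper leaves implicit.
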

\begin{proof}
  We describe a polynomial-time construction of such a point
  set. The construction is trivial when $n=1$. Assume that a set
  \(\P'\) of \(1\leq{}t<n\) points with these properties has been
  constructed. Observe that all the points which are ``forbidden''
  by the set \(\P'\)---that is, points which cannot be added to
  the set \(\P'\) without violating one of the three
  properties---lie on a bounded number of lines. For instance, the
  first property is violated if and only if a new point lies on
  one of the $\binom{t}{2}$ lines defined by pairs of points in
  \(\P'\).  Similarly, the second and third conditions forbid
  points which lie on sets of $\Oh(t^3)$ and $\Oh(t^5)$ lines,
  respectively. Further, we can compute all these lines in
  polynomial time by enumerating all possibilities.

  We augment the set \(\P'\) by choosing a point which is not on
  one of these lines.  To facilitate this we pick all our points
  from a grid of size $n^6 \times n^6$.  As we argued above, only
  $\Oh(n^5)$ lines are forbidden at any point during the
  construction. Each of these lines intersects the
  \(n^{6}\times{}n^{6}\) grid in $\Oh(n^6)$ points. Thus at most
  $\Oh(n^{11})$ of the $n^{12}$ grid points are forbidden at any
  time, and we augment \(\P'\) with a point on the grid which is
  not forbidden. Our construction consists of repeating this step
  \(n\) times, and takes polynomial time.
\end{proof} 

\paragraph{Graphs.}
All graphs in this article are finite, simple, undirected, and
loopless. In general we follow the graph terminology of Diestel's
textbook on the subject~\cite{Diestel2005}. The (open)
neighborhood \(N_{G}(v)\) of a vertex \(v\) in a graph \(G\) is
the set of all vertices \(u\) of \(G\) such that \(u\) and \(v\)
are adjacent. A \emph{vertex cover} of a graph \(G\) is a subset
\(S\subseteq V(G)\) of the vertex set of \(G\) such that for every
edge \(e\) of \(G\), there is at least one vertex in \(S\) which
is incident with \(e\).

\paragraph{Parameterized complexity.}
A parameterized problem \(\mathcal{A}\) is a subset of
\(\Sigma^*\times\mathbb{N}\) for some finite alphabet \(\Sigma\);
the second component of
instances~\((x,k)\in\Sigma^*\times\mathbb{N}\) is called the
\emph{parameter}.  A \emph{kernelization algorithm} for problem
\(\A\) is an algorithm that given \((x,k)\in\Sigma^*\) takes time
polynomial in \(|x|+k\) and outputs an equivalent instance
\((x',k')\) of \(\A\)---that is, one which preserves the \yes/\no
answer---such that \(|x'|,k'\) are both bounded by some
function~$h$ of \(k\); this function is called the \emph{size} of
the
kernelization~\cite{DowneyFellows1999,FlumGrohe2006,Niedermeier2006}. If~$h$
is polynomially bounded then we have a \emph{polynomial
  kernelization}. One of the most important (fairly) recent
results in kernelization is a framework for ruling out polynomial
kernels for certain problems, assuming that \conjecture
\cite{BodlaenderDFH09,FortnowS11}.

Our work uses an extension of the lower bound framework that is
due to Dell and van Melkebeek~\cite{DellvanMelkebeek2010}. Their
work provided the first tool to prove concrete polynomial lower
bounds on the possible size of kernelizations. In fact, their
lower bounds are proven for an abstraction of kernelization as
oracle communication protocols of bounded cost; the lower bounds
carry over immediately.  (For intuition, the first player could
run a kernelization with size $h$ and send the outcome to the
oracle, who decides the instance, obtaining a protocol of
cost~$h(k)$ for deciding~$(x,k)\in\A$. Of course, having a
multi-round communication or using an active oracle that queries
the user appears to be more general than kernelization.)

\begin{definition}[Oracle Communication
  Protocol]\label{definition:oracleprotocol}
  An oracle communication protocol for a language \(L\) is a
  communication protocol between two players. The first player is
  given the input \(x\) and has to run in time polynomial in the
  length of the input; the second player is computationally
  unbounded but is not given any part of \(x\). At the end of the
  protocol the first player should be able to decide whether
  \(x\in{}L\). The \emph{cost} of the protocol is the number of
  bits of communication from the first player to the second
  player.
\end{definition}

Our lower bound for the \emph{size} of kernels for \PLC will follow from
a reduction from the well-known \NPH \VC problem (cf.~\cite{Karp1972}).
We recall the problem setting.

\begin{parnamedefn} {\VC}
  {A graph \(G\), and a positive integer
    \(k\).}
  {Does the graph \(G\) have a vertex cover of size at most \(k\)?}
  {\(k\)}
\end{parnamedefn}

The smallest known kernel for the problem has at most \((2k-c\log
k)\) vertices for any fixed constant \(c\)~\cite{Lampis2011}. This
kernel can have \(\Omega(k^{2})\) \emph{edges}, and so the total
\emph{size} of the kernel is \(\Omega(k^{2})\), and not
\(\Oh(k)\). Dell and van~Melkebeek~\cite{DellvanMelkebeek2010}
showed that this is---in a certain sense---the best possible upper
bound on the kernel size for \VC. In fact, they proved much more
general lower bounds about the cost of a communication process
used to decide languages.

\begin{theorem}[\textbf{\cite[Theorem 2]{DellvanMelkebeek2010}}]\label{theorem:vc_protocol_lower_bound}
  Let \(d\geq 2\) be an integer, and \(\varepsilon\) a positive
  real. If \conjecture, then there is no protocol of cost
  \(\Oh(n^{d-\varepsilon})\) to decide whether a \(d\)-uniform
  hypergraph on \(n\) vertices has a vertex cover of at most \(k\)
  vertices, even when the first player is conondeterministic.
\end{theorem}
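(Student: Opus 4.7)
The plan is to follow the cross-composition / weak-distillation framework that underlies most such polynomial kernel-size lower bounds. I would view the statement as a consequence of two ingredients: (i) a combinatorial OR-packing for \(d\)-uniform \VC that compresses $t$ instances of size $n$ into a single instance on roughly $n \cdot t^{1/d}$ vertices, and (ii) the general ``complementary witness lemma'' of Fortnow--Santhanam (as sharpened by Dell--van~Melkebeek) which says that a communication protocol that is cheap \emph{per instance} on the OR of many instances of an \NPH problem forces \caveat. A protocol of cost $\Oh(n^{d-\varepsilon})$ on the packed instance would yield such a cheap per-instance protocol, producing the desired collapse.

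More concretely, I would first set up the OR-packing. Given $t = s^{d}$ instances $H_{1},\ldots,H_{t}$ of \(d\)-\VC with vertex set~$[n]$ and target~$k$, index them by tuples $(i_{1},\ldots,i_{d})\in[s]^{d}$ and introduce $d$ vertex groups $V_{1},\ldots,V_{d}$, each of size $s\cdot n$, whose elements are pairs $(j,v)$ with $j\in[s]$, $v\in[n]$. For each hyperedge $\{v_{1},\ldots,v_{d}\}$ of $H_{(i_{1},\ldots,i_{d})}$, add the hyperedge $\{(i_{1},v_{1}),\ldots,(i_{d},v_{d})\}$ to the combined hypergraph $H^{\star}$. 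The target is set to $k^{\star}=d\cdot s\cdot k$ (one ``copy'' of a cover in each group, for each value of the index). A short argument shows that $H^{\star}$ has a vertex cover of size at most $k^{\star}$ iff \emph{every} $H_{(i_{1},\ldots,i_{d})}$ does; taking complements inside each group then turns this into an OR (which is what we need, after switching to the conondeterministic first player the theorem allows). The resulting $H^{\star}$ has $n^{\star}=d\cdot s\cdot n$ vertices.

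Second, I would plug this into the distillation framework. A hypothetical protocol of cost $\Oh((n^{\star})^{d-\varepsilon})$ on $H^{\star}$ has cost $\Oh((s n)^{d-\varepsilon})=\Oh(n^{d-\varepsilon}\cdot t^{1-\varepsilon/d})$; averaged over the $t$ packed instances, this is $\Oh(n^{d-\varepsilon}\cdot t^{-\varepsilon/d})$ bits per instance. Choosing $t$ large enough (say $t \approx n^{d^{2}/\varepsilon}$) drives this below $1$ bit per instance, which is a weak-OR-distillation of an \NPH language. The Dell--van~Melkebeek strengthening of the Fortnow--Santhanam/Bodlaender--Downey--Fellows--Hermelin lemma then yields \caveat. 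The conondeterministic extension is handled by the same framework, since that lemma is stated (and proved) with the first player allowed to be conondeterministic.

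The main obstacle is engineering the packing so that the parameters match exactly: we need the hyperedges of $H^{\star}$ to be $d$-uniform, the blow-up in the number of vertices to be $\Theta(n\cdot t^{1/d})$ (not worse), and the target $k^{\star}$ to grow only polynomially in the instance size so that the OR-behaviour is preserved under the ``complementary witness'' transformation. The grid/group construction sketched above is the standard way to get all three simultaneously; verifying the forward and reverse implications of the cover equivalence, while routine, is where one has to be careful and is the part I would write out in detail.
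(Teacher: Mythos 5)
The paper does not prove this statement: it is Theorem~2 of Dell and van~Melkebeek~\cite{DellvanMelkebeek2010}, quoted and used as a black box, so there is no internal proof to compare against. Evaluating your sketch on its own terms, the high-level framework (OR-distillation plus the Complementary Witness Lemma, with the conondeterministic first player handled inside that lemma) is the right one, but the concrete composition step has a genuine gap.

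The ``short argument'' you defer is where the problem lies. Your grid construction places $H_{(i_1,\ldots,i_d)}$ on the vertex set $A_{(i_1,\ldots,i_d)}=\{(i_j,v,j): j\in[d], v\in[n]\}$ and sets $k^\star=d\,s\,k$. Neither direction of the claimed AND-equivalence holds. Backward: a vertex cover~$T$ of $H^\star$ with $|T|\leq d s k$ restricts to a cover $T\cap A_{(i_1,\ldots,i_d)}$ of each embedded instance, but nothing bounds this restriction by~$k$; even an averaging argument only gives that \emph{some} instance receives roughly $d k$ vertices, not that \emph{every} instance receives at most $k$. Forward: from small covers $S_{(i_1,\ldots,i_d)}$ of each input there is in general no way to choose per-row sets $T_{i,j}\subseteq[n]$ so that $T_{i_1,1}\cup\cdots\cup T_{i_d,d}$ covers $H_{(i_1,\ldots,i_d)}$ simultaneously for \emph{all} tuples, because the row-set $T_{i,j}$ must serve all $s^{d-1}$ tuples whose $j$-th coordinate is $i$, and the individual covers $S_{(i_1,\ldots,i_d)}$ need not be ``separable'' along the group axes. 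So the packing does not realize an AND. The subsequent step---``taking complements inside each group then turns this into an OR''---is asserted but not explained, and it is not a routine transformation: complementing relates Vertex Cover to Independent Set/Clique, but it does not convert an AND-over-instances into an OR-over-instances for the \emph{same} problem at the \emph{same} instance size, which is what the Complementary Witness Lemma needs when applied to an NP-hard language.

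The actual Dell--van~Melkebeek argument does not pack $d$-uniform Vertex Cover instances on a $d$-partite grid. They reduce from the OR of SAT-type instances and use a combinatorial \emph{packing design}: a family $A_1,\ldots,A_t$ of subsets of a universe of size $\tilde\Oh(t^{1/d}\cdot s)$ with $|A_i|=\Theta(s)$ and pairwise intersections of size strictly less than $d$. The small pairwise intersections guarantee that no $d$-element hyperedge can belong to two different embedded instances, which is precisely what makes the OR semantics and the parameter bookkeeping work. Your grid has the opposite intersection pattern (two instance supports can share $\Theta((d-1)n)$ vertices), which is why your equivalence fails. To repair the proof you would need to replace the grid by such a packing and redo the parameter analysis; the rest of your outline (cost accounting, choice of $t$, application of the lemma with a conondeterministic first player) would then go through as sketched.
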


We will use the immediate corollary that
\VC admits no oracle communication protocol of cost \(\Oh(k^{2-\varepsilon})\)
for deciding whether a graph $G$ has a vertex cover of size at most $k$
(and hence also no kernelization of that size), for any $\varepsilon>0$, unless \caveat.

\paragraph{The \PLC problem.}
A formal statement of the central problem of this paper is as
follows:
\begin{parnamedefn} {\PLC}
  {A set of \(n\) points
    \(\P\) in the plane, and a positive integer
    \(k\).}
  {Is there a set of at most \(k\) lines in the
    plane which cover all the points in \(\P\)?}
  {\(k\)}
\end{parnamedefn}

We use the following ``dual'' problem to \PLC in our proof of
\autoref{theorem:plc:sizebound}. In the input to this problem,
each line in the set \(\L\) is given as a pair
\((m,c)\subseteq\Q\times\Q\) where \(m\) is the slope of the line
and \(c\) its \(Y\)-intercept.

\begin{parnamedefn} {\LPC}
  {A set of \(n\) lines
    \(\L\) in the plane, and a positive integer
    \(k\).}
  {Is there a set of at most \(k\) points in the
    plane which cover all the lines in \(\L\)?}
  {\(k\)}
\end{parnamedefn}
There is a polynomial-time, parameter-preserving reduction from
the dual to the primal: 

\begin{lemma}\label{lem:lpc_to_plc}
  There is a polynomial-time reduction from the \LPC problem to
  the \PLC problem which preserves the parameter \(k\). That is,
  the reduction takes an instance \((\L,k)\) of \LPC to an
  equivalent instance \((\P,k)\) of \PLC.
\end{lemma}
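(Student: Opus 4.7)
The plan is to use classical affine point-line duality. For each line $\ell_i = (m_i, c_i) \in \L$ define the dual point $p_i = (m_i, -c_i) \in \Q^2$, set $\P = \{p_1, \ldots, p_n\}$, and output the PLC instance $(\P, k)$; this is manifestly polynomial-time and passes the parameter through unchanged. Correctness rests on the classical incidence identity: a point $(a,b)$ lies on the line $y = mx + c$ iff $b = am + c$, iff the dual point $(m, -c)$ lies on the dual line $y = ax - b$.

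This identity yields the forward direction immediately: given any point cover $\{(a_r, b_r)\}_{r \leq k}$ of $\L$, the set of non-vertical lines $\{y = a_r x - b_r\}$ covers $\P$ and has the same size.

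For the backward direction, the plan is to dualize each non-vertical line $y = Ax + B$ in a given PLC cover $\L'$ back to the point $(A, -B)$. The main obstacle is handling vertical lines in $\L'$: a vertical line $x = x_0$ covers all points in $\P$ with $x$-coordinate $x_0$, which under the duality correspond to a class of mutually parallel lines in $\L$, and such parallel lines admit no common covering point in the primal---so a single vertical line in PLC does not in general translate to a single point in LPC.

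The hard part of the argument is dispensing with this obstacle. The plan is to pre-process $(\L, k)$ so that no two lines in $\L$ are parallel, which ensures every dual point in $\P$ has a unique $x$-coordinate; then every vertical line in any PLC cover passes through at most one dual point and can be replaced freely by a non-vertical line through that point without changing the cover size, after which the dualization produces an LPC cover of size at most $k$. This pre-processing can be implemented either by augmenting $\P$ with a small set of witness gadget points for each parallel class in $\L$ (forcing any PLC cover to spend one line per parallel line, by a general-position construction analogous to \autoref{lemma:special_point_set_construction}), or by observing that the LPC instances arising in the paper's reduction from \VC can be arranged to have no parallel lines.
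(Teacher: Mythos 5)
Your duality map, the forward direction, and the back-dualization of non-vertical cover lines coincide exactly with the paper's reduction, and your diagnosis of the crux is sound---in fact it is sharper than the paper's own write-up, which merely asserts that the incidence-preserving duality makes \((\L,k)\) and the dual instance equivalent and never discusses vertical lines (the rotation performed there only ensures the \emph{input} lines are non-vertical; no rotation or affine change of coordinates affects parallelism). The obstacle is real: for \(\L=\{y=0,\ y=1,\ y=2\}\) and \(k=1\) the \LPC instance is \no, yet the dual points \((0,0),(0,-1),(0,-2)\) all lie on the vertical line \(x=0\), so the bare dual \PLC instance is \yes. You are also right that if no two lines of \(\L\) are parallel, then the dual points have pairwise distinct \(x\)-coordinates, any vertical line in a cover meets at most one of them and can be swapped for a non-vertical line through that point, and the dualization of the cover then goes through.

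The gap is in the promised pre-processing, which is exactly the step carrying the remaining difficulty. You cannot, in general, transform an arbitrary \LPC instance into an equivalent one with the same budget \(k\) in which no two lines are parallel: parallelism is precisely what forces distinct covering points, and it survives every coordinate change at your disposal. The gadget sketch (witness points forcing ``one line per parallel line'') is not worked out and, read literally, destroys equivalence: take two parallel classes of \(t\) lines each in general position otherwise; the primal optimum is \(t\) (one intersection point per pair along a ``diagonal''), while your gadget would force at least \(2t\) dual lines, so the answers at \(k=t\) differ. The fallback---that the \LPC instances produced by \autoref{lemma:reduction} have no two parallel lines---is correct (property 2 of \autoref{lemma:special_point_set_construction} guarantees it) and suffices for \autoref{theorem:plc:oraclecostbound}, which would even tolerate a constant-factor parameter increase; but it establishes only a reduction from that restricted class of instances, not \autoref{lem:lpc_to_plc} as stated for arbitrary \((\L,k)\). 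So as written, your argument (like the paper's terse one) proves the lemma only for parallel-free inputs; handling arbitrary parallel classes at the same parameter value needs an idea that neither you nor the paper supplies.
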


\begin{proof}
  The lemma follows from a well known duality of point and lines
  (see e.g.,~\cite{M4}). Given a set of points $\P =
  \{p_1,\cdots,p_n\}$ and a set of lines $\L = \{ \ell_1, \cdots,
  \ell_m \}$, one can obtain lines $\tilde{\P} =\{\tilde{p}_1,
  \cdots, \tilde{p}_n\}$ and points $\tilde{\L}=\{\tilde{\ell}_1,
  \cdots, \tilde{\ell}_m\}$ such that the point $\tilde{\ell}_i$
  lies on the line $\tilde{p}_j$ if and only if the point $p_j$
  lies on the line $\ell_i$. In other words, each line can be
  replaced by a point and each point can be replaced by a line
  while preserving the incidences between lines and points.

  To see this, first note that it can be assumed without loss of
  generality that none of the given lines $\ell_i$ is
  vertical. This can be easily done by a suitable rotation of the
  coordinate system since there are only a finite number of
  directions for the $y$-axis to avoid. Next note that a point
  $(a,b)$ lies on a line $y= mx + c$ if and only if the point
  $(m,-c)$ lies on the line $y=ax-b$ since both conditions are
  equivalent to $b=ma+c$. Thus if $p_i = (a_i, b_i)$, we can take
  $\tilde{p}_i$ to be the line $y = a_ix -b_i$ and if $\ell_j$ is
  the line $y = m_jx + c_j$, we can take $\tilde{\ell}_j$ to be
  the point $(m_j,-c_j)$. This transformation can be done in
  polynomial time for any reasonable representation of rational
  numbers.

  The above ``point-line duality'' implies that $(\L,k)$ and
  $(\tilde{\L},k)$ are equivalent instances of \LPC and \PLC respectively.
  \end{proof}

\section{Enumerating Order Types}\label{sec:order_types}
Goodman and Pollack~\cite{GoodmanPollack1991} proved that the
number of combinatorially distinct order types on $n$ points in
general position is $\Oh(n^{4n(1+o(1))})$.  Using this we prove an
upper bound on the number of all order types on $n$ points,
including those that come from point sets not in general
position. We defer the proof of the following lemma to the end of
this section.

\begin{lemma}\label{lem:otub}
  There are at most $n^{\Oh(n)}$ combinatorially distinct order
  types defined by $n$ points in $\mathbb{R}^2$.
\end{lemma}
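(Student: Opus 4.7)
The plan is to count order types---including degenerate ones---by viewing each as a three-valued sign pattern of a fixed family of polynomials on configuration space, then invoking a standard bound from real algebraic geometry.

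Identify an $n$-point configuration in the plane with a vector in $\mathbb{R}^{2n}$ (its $2n$ coordinates), and for each triple $(i,j,k)$ with $1\le i<j<k\le n$ consider the polynomial
\[
f_{ijk}(p_1,\ldots,p_n)\;=\;\det M(\langle p_i,p_j,p_k\rangle),
\]
which has total degree $2$ in the $2n$ coordinates. By the definition of order type given in \autoref{sec:preliminaries}, the order type of a configuration is the vector $(\operatorname{sgn} f_{ijk})_{i<j<k}\in\{+1,0,-1\}^{\binom{n}{3}}$, so combinatorially distinct order types are in bijection with three-valued sign patterns of the family $\{f_{ijk}\}$ realized on $\mathbb{R}^{2n}$. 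I would then invoke the three-valued sign-pattern bound (Pollack--Roy; see also Alon, Basu--Pollack--Roy): for $m$ real polynomials of degree at most $d$ in $N$ variables with $m\ge N$, the number of realized sign patterns in $\{+1,0,-1\}^m$ is at most $(\Oh(md/N))^N$. Substituting $m=\binom{n}{3}\le n^3$, $d=2$ and $N=2n$ yields $(\Oh(n^2))^{2n}=n^{\Oh(n)}$ order types in total, which is exactly the claim.

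This extends the Goodman--Pollack bound ``in the same breath'': their bound corresponds to restricting the count to $\{+1,-1\}^m$-valued patterns (i.e., full-dimensional cells of the arrangement $\bigcup_{ijk}\{f_{ijk}=0\}$), while the lower-dimensional cells---on which at least one $f_{ijk}$ vanishes---correspond precisely to the degenerate order types and are absorbed into the same asymptotic bound. The main obstacle I anticipate is essentially bibliographic: one must use the three-valued strengthening of Warren's theorem rather than its original form, which counts only connected components of the arrangement's complement. A purely combinatorial alternative---perturb each degenerate configuration to general position, apply Goodman--Pollack as a black box to the perturbation, then separately bound the number of possible ``collinearity structures'' of the original configuration---looks natural but runs into the difficulty that abstract linear spaces on $n$ labelled points can number as many as $2^{\Omega(n^2)}$, and trimming this down to $n^{\Oh(n)}$ using Euclidean realizability seems to require an algebraic-geometric ingredient equivalent to the polynomial method.
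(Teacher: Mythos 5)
Your argument is correct, but it follows a genuinely different route from the paper. You bound the number of realizable three-valued sign patterns of the $\binom{n}{3}$ orientation polynomials $f_{ijk}$ (degree $2$ in $2n$ variables) directly, via the Pollack--Roy/Basu--Pollack--Roy strengthening of the Warren/Milnor--Thom bounds, which gives $(\Oh(n^{2}))^{2n}=n^{\Oh(n)}$ labeled order types and hence at most that many equivalence classes; degenerate configurations are handled automatically because zero signs are counted. (Two minor points worth making explicit: the hypothesis $m\geq N$, i.e.\ $\binom{n}{3}\geq 2n$, holds for all but trivially small $n$, and counting \emph{labeled} order types only overestimates the number of combinatorially distinct ones, so the reduction to sign patterns is legitimate.) The paper instead keeps the Goodman--Pollack bound for point sets \emph{in general position} as its only black box and shows that every (possibly degenerate) order type on $n$ points is encoded by the order type of $2n$ points in general position: it realizes the degenerate order type as a solution of the system from the proof of \autoref{lem:enumerate_order_types}, replaces each point by a tiny segment inside its grid cell with the $2n$ endpoints in general position, and argues that zero orientations are recoverable because the eight endpoint triples corresponding to a collinear triple cannot all have the same sign. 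What your approach buys is brevity and self-containedness modulo the real-algebraic-geometry citation (and it is, in spirit, closer to how the Goodman--Pollack bound itself is proved); what the paper's approach buys is that the same cell/segment construction is reused for the enumeration algorithm of \autoref{lem:enumerate_order_types}, which is needed later for the oracle communication protocol, so no additional machinery has to be imported.
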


Thus there are only $2^{\Oh(n \log{n})}$ \emph{combinatorially
  distinct} instances of the \PLC problem with $n$ points---see
\autoref{lemma:comb_equiv_plc}.  In this section we consider the
algorithmic problem of enumerating all order types of $n$-point
sets; we need such an algorithm for proving our lower bound in
\autoref{section:pointlowerbound}. Note that we do not need an
\emph{efficient} algorithm; we only need to show that there is an
algorithm which solves the problem and terminates in finite time
for each \(n\).  Given this, a first approach would be to consider
a large enough $N=f(n)$ and produce the order type of every set of
$n$ points whose coordinates are integers in $[N]$. The hope here
would be that every order type comes from some point set whose
points have coordinates of finite precision. Unfortunately, this
is not true: there are order types $\sigma$ which cannot be
realized by point sets with rational coordinates~\cite{GPS89}. We
need a somewhat more sophisticated argument to prove
\autoref{lem:enumerate_order_types}.

\begin{proof}[Proof of \autoref{lem:enumerate_order_types}]
  Goodman et
  al.~\cite{GPS89} show that for any function \(\sigma\) which is
  the order type of some (unknown) set of \(n\) points \emph{in
    general position}, there exists an ordered set \(\P\) of \(n\)
  points in the plane such that (i) the order type of \(\P\) is
  \(\sigma\), and (ii) each point in \(\P\) has integer
  coordinates in \(\left[2^{2^{\Oh(n)}}\right]\).  We show that by
  modifying their proof to work with the more general notion of an
  order type of \emph{cells} rather than points we can
  \emph{enumerate} all order types of \(n\) points, including
  those which correspond to degenerate point sets.

  We start with a brief overview of the proof of Goodman et al.
  We are given a function
  \(\sigma:\binom{[n]}{3}\mapsto\{-1,+1\}\) which is the order
  type of some point set in general position. The goal is to show
  that there exists an ordered point set
  \(\P=\langle{}p_{1},\ldots,p_{n}\rangle\) with order type
  $\sigma$ such that the coordinates of the points \(p_{i}\) are
  integers of magnitude $2^{2^{\Oh(n)}}$.  We first find an
  ordered point set \(\Que=\langle{}q_{1},\ldots,q_{n}\rangle\)
  whose order type is $\sigma$---but which may have points with
  non-integral coordinates---and then modify \(\Que\) to obtain
  \(\P\).  Let $x_i$ and $y_i$ be variables which stand for the
  $x$ and $y$ coordinates of $q_i$.  Let
  \(\Delta(\langle{}i,j,k\rangle)=det\ M(\langle{}
  q_{i},q_{j},q_{k}\rangle)\).  Our variables must satisfy the set
  of equations
  \(sgn\,\Delta(\langle{}i,j,k\rangle)=\sigma(i,j,k)\;\forall\,\langle{}i,j,k\rangle\in\binom{[n]}{3}\),
  which can be rewritten as
  \(\Delta(\langle{}i,j,k\rangle)\cdot\sigma(i,j,k)>0\;\forall\,\langle{}i,j,k\rangle\in\binom{[n]}{3}\). This
  set of equations can safely\footnote{See the paper of Goodman et
    al.~\cite{GPS89} for all technical details.} be replaced by
  \(\Delta(\langle{}i,j,k\rangle)\cdot\sigma(i,j,k)\geq1\;\forall\,\langle{}i,j,k\rangle\in\binom{[n]}{3}\). Since
  we started with the assumption that $\sigma$ is the order type
  of some point set, we know that this set of inequalities has a
  solution. Observe that this is a set of inequalities of degree
  two where the coefficients have bit length $1$. Hence, by a
  result of Grigor'ev and Vorobjov~\cite{GrigorevVorobjov1988}
  they have a solution \(\Que=\langle q_{1},\ldots,q_{n}\rangle
  \in\mathbb{R}^{2n}\) within a ball of radius $2^{2^{\Oh(n)}}$
  around the origin.  It can be shown that the minimum distance
  between a point in \(\Que\) and the line determined by two other
  points in \(\Que\) is bounded away from $0$, and that the
  minimum angle defined by three points in \(\Que\) is bounded
  away from $0$. This is because (i) the determinants in our
  inequalities have magnitude at least $1$, and (ii) the points in
  \(\Que\) lie inside a ball of bounded radius. It follows that
  moving these points by small distances does not change their
  order type, and so each of these points can be moved to a point
  whose coordinates are integral multiples of some
  \(\ell\in2^{-2^{\Oh(n)}}\). By scaling up we get integral
  coordinates of magnitude at most $2^{2^{\Oh(n)}}$. This gives us
  the required set $\P$.

  This strategy breaks down when $\sigma$ is an order type of a
  \emph{degenerate} point set. To see why, let us consider the
  order type \(\sigma\) of a degenerate point set and proceed
  exactly as above. Since we now have collinear triples, our set
  of inequalities has the following form
\begin{equation}\label{eqn:degenerate_inequalities}
\begin{aligned}
\Delta(\langle{}i,j,k\rangle)\cdot\sigma(i,j,k)&\geq1\;\;\forall\,\langle{}i,j,k\rangle\in\binom{[n]}{3}\text{ s.t. }\sigma(i,j,k)\neq0\\
\Delta(\langle{}i,j,k\rangle)&=0\;\;\forall\,\langle{}i,j,k\rangle\in\binom{[n]}{3}\text{ s.t. }\sigma(i,j,k)=0,
\end{aligned}
\end{equation}
and the Grigor'ev-Vorobjov bound still holds.  The problem arises
in the step where we move some point slightly to make its
coordinates multiples of \(\ell\in2^{-2^{\Oh(n)}}\). While this
cannot make a clockwise triple counter-clockwise or \emph{vice
  versa}, it could easily destroy the collinearity of triples and
thus violate the set of equations. 

To get around this problem, we replace points with bigger regions
in the argument. We first extend the notion of orientation from
triples of points to triples of convex regions in the plane, as
follows. Let $r_a$, $r_b$, and $r_c$ be three convex regions in
the plane. We assign an orientation $+1$ (resp. $-1$) to the
triple $\langle r_a,r_b,r_c \rangle$ if for each choice of points
$p_a \in r_a, p_b \in r_b$, and $p_c \in r_c$ the triple $\langle
p_a, p_b, p_c \rangle $ has orientation $+1$ (resp.\ $-1$). If a
triple $\langle r_a,r_b,r_c \rangle$ is not assigned an
orientation according to this rule then we assign it the
orientation $0$; this happens if and only if there is a line
intersecting the three regions. As with points, we define an order
type for an ordered set of regions $\langle r_1, \cdots,
r_n\rangle$ to be the function \(\sigma\) which satisfies
$\sigma(\langle i,j, k \rangle) = orientation(r_i, r_j, r_k)$.

Once we have a solution
\(\Que=\langle{}q_{1},\ldots,q_{n}\rangle\in\mathbb{R}^{2n}\) to
the inequalities (\ref{eqn:degenerate_inequalities}) as described
above, we superimpose a grid with cells of side length
$2^{-2^{\Oh(n)}}$ onto the coordinate system. For each
\(q_{i}\in\Que\), we take $r_i$ to be the cell in which the point
\(q_{i}\) lies. Observe that the order type of these $r_i$'s is
the same as the order type of the $q_i$'s. This can be seen as
follows. Whenever a triple $ \langle q_{i_1}, q_{i_2}, q_{i_3}
\rangle$ of points in $\Que$ have non-zero orientation, any three
points $a, b, c$ lying in the cells $r_{i_1}, r_{i_2}$, and
$r_{i_3}$ respectively have the same orientation since, as argued
before, moving the points slightly does not change
orientation. Also, if a triple of points has orientation $0$ then
the corresponding cells have orientation $0$ by definition. As
before, we can scale up the grid so that the $r_i$'s are cells of
side length $1$ whose corners have integer coordinates.  It
follows that we can enumerate all order types defined by $n$
points by (i) taking an integer grid of size $2^{2^{\Oh(n)}}$, and
(ii) listing all order types defined by some $n$-subset of the
cells in this grid.
\end{proof}

\begin{proof} [Proof of \autoref{lem:otub}]
  For any given order type defined by an ordered point set
  \(P=\langle{}p_{1},\dotsc,p_{n}\rangle\), we show that there is
  an ordered set of $2n$ points in general position whose order
  type \emph{encodes} the order type of $P$. Together with the
  Goodman-Pollack bound, this shows that the number of
  combinatorially distinct order types defined by any set of $n$
  points (not necessarily in general position) is $n^{\Oh(n)}$.

  We use the ideas described in the proof of
  \autoref{lem:enumerate_order_types}. As we did there, we obtain
  points \(q_{1},\dotsc,q_{n}\) so that (i) the order type of
  \(\langle{}q_{1},\dotsc,q_{n}\rangle\) is the same as that of
  \(\langle{}p_{1},\dotsc,p_{n}\rangle\) and (ii) moving $q_i$
  inside a small cell $r_i$ around it does not change orientations
  of triples with non-zero orientations. If required we also shift
  the superimposed grid in such a way that no point $q_i$ lies on
  the boundary of the corresponding region $r_i$. For each point
  $q_i$, we pick a line segment $s_i$ containing $q_i$ such that
  (i) the end-points of $s_i$ lie in $r_i$ and (ii) the $2n$
  end-points of these segments are in general position. Such
  segments can be found by picking the end-points one by one. We
  always pick them inside the appropriate region to satisfy the
  first condition and when we pick the next one, there are only a
  finite number of lines to avoid in order satisfy the second
  condition.  As before, we get that the order type of the ordered
  segment set \(\langle{}s_{1},\dotsc,s_{n}\rangle\) is the same
  as that of the ordered point set
  \(\langle{}p_{1},\dotsc,p_{n}\rangle\). Denote the end-points of
  $s_i$ by $s^1_i$ and $s^2_i$. Then, for any triple of points
  $\langle p_i, p_j, p_k\rangle$ with orientation $+1$ all the
  eight triples of the form $\langle s^u_i, s^v_j,s^w_k\rangle$
  where $u,v,w \in \{1,2\}$ have the orientation $+1$. The same
  holds for triples with orientation $-1$. Furthermore, for any
  triple $\langle p_i, p_j, p_k\rangle$ with orientation $0$, the
  eight triples \emph{do not} have the same orientation. To see
  this, consider the line $\ell$ through $q_i$, $q_j$, and
  $q_k$. Clearly $\ell$ intersects the segments $s_i$, $s_j$, and
  $s_k$. We move $\ell$ parallel to itself until it passes through
  an end-point $x$ of one of the segments, say $s_i$, for the
  first time. (We skip the parallel move if $\ell$ is collinear with
  a segment and just let $x$ be either endpoint. Note that by general
  position of segment endpoints no further endpoints are on $\ell$ in this case.)
  Then, keeping this point \(x\) fixed we rotate
  $\ell$, if required, until it passes through the end-point of
  another segment, say $s_j$, for the first time. (We skip rotation
  if the parallel move already hit two segment endpoints; as they are
  in general position we cannot hit three at the same time.) The line $\ell$
  still intersects $s_k$ and therefore the two end-points of $s_k$
  are on opposite sides of $\ell$ (neither of them can be on
  $\ell$ since we assumed that the set of end-points is in general
  position). Thus $orientation(\langle x, y, s^1_k \rangle) \neq
  orientation(\langle x, y, s^2_k \rangle)$.

  Thus, given the orientations of ordered triples in the set $\{
  s^1_1, s^2_1, \cdots, s^1_n, s^2_n\}$ we can deduce the
  orientations of ordered triples in $\{p_1, \cdots, p_n
  \}$. Hence the order type of $\langle s^1_1, s^2_1, \cdots,
  s^1_n, s^2_n\rangle$ encodes the order type of $\langle p_1,
  \cdots, p_n \rangle$.
\end{proof}

\section{Lower Bound on Kernel Size}\label{sec:reduction}
In this section we prove Theorem~\ref{theorem:plc:sizebound}. The
main component of our proof is a polynomial-time reduction from
\VC to \LPC in which the parameter value is exactly doubled.

\begin{lemma}\label{lemma:reduction}
  There is a polynomial-time reduction from \VC to \LPC which maps
  instances~$(G,k)$ of \VC to equivalent instances~$(\L,2k)$ of
  \LPC. 
\end{lemma}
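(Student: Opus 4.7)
The plan is to construct, in polynomial time from an instance $(G,k)$ of \VC with $V(G)=\{v_1,\ldots,v_n\}$, a set $\L$ of lines such that $G$ admits a vertex cover of size at most $k$ iff $\L$ admits a point cover of size at most $2k$. First I would apply \autoref{lemma:special_point_set_construction} to obtain $n$ points $\P=\{p_1,\ldots,p_n\}$ in general position, with no two of the defined lines parallel and no three defined lines concurrent outside $\P$; the critical consequence is that every point outside $\P$ lies on at most two lines of the form $\overline{p_ip_j}$. The naive attempt $\L=\{\overline{p_ip_j}:\{v_i,v_j\}\in E(G)\}$ only guarantees $\pi(\L)\le\tau(G)\le 2\pi(\L)$, which is not enough for equivalence at parameter $2k$: already on $G=2K_2$ the two edge-lines meet at a single external point, giving $\pi(\L)=1$ while $\tau(G)=2$.

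To bridge this gap I plan to augment the natural construction with a doubling gadget that forces every vertex of a cover to account for two points of $Q$. A natural realisation is to take a translated second copy $\P'=\{p_i+\vec v\}_{i=1}^n$, where $\vec v$ is chosen generically so that (i) the no-three-concurrent and no-parallel properties continue to hold within each copy, (ii) no point outside $\P\cup\P'$ lies on three lines of the final set $\L$, while (iii) for each edge the two corresponding lines $\overline{p_ip_j}$ and $\overline{p'_ip'_j}$ are parallel (this is automatic from translation). Parallelism prevents any single point from covering both lines of the same edge; additional per-vertex auxiliary lines through $p_i$ and $p'_i$ may also be needed to prevent cross-family external intersections from providing too cheap a coverage.

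The forward direction will be routine: given a vertex cover $S$ of size at most $k$, the set $\{p_i,p'_i:v_i\in S\}$ has size $2|S|\le 2k$, and because every edge has an endpoint in $S$, both of its two parallel lines are covered by the corresponding pair of points. The backward direction---extracting a vertex cover of size at most $k$ from a point cover $Q$ of size at most $2k$---is the crux. The plan is a charging argument: classify the points of $Q$ into vertex points (in $\P\cup\P'$) and external points; use the no-three-concurrent property to bound each external point's coverage by two lines of $\L$; and charge each of the two parallel lines of an edge half a unit of vertex-cover cost. Parallelism plus the gadgets will ensure that $Q$ must spend at least $2\tau(G)$ points in total.

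The hard part will be making this backward charging rigorous in the presence of external points that cover one line from $\P$ and one line from $\P'$, corresponding to two \emph{distinct} edges; a naive charging may double-count here. Handling these cross-family configurations will likely require either additional per-vertex gadget lines whose only efficient cover is the corresponding vertex point, or a careful case analysis over the possible external intersection patterns, underpinned by the genericity of $\vec v$ together with \autoref{lemma:special_point_set_construction}. Once this is in place the construction runs in time polynomial in $n+|E(G)|$, as required by \autoref{theorem:plc:sizebound}.
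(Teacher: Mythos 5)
There is a genuine gap, and it sits exactly where you flag it: the backward (soundness) direction is never actually established, and the concrete doubling gadget you propose does not work. Translating the whole configuration by a generic vector $\vec v$ and taking one line per edge in each copy only rules out a single point covering the \emph{two parallel lines of the same edge}; it does nothing about a single external point covering two lines coming from two \emph{different} edges \emph{within the same copy}, which is precisely the failure mode of the naive construction. Counterexample: let $G$ be a perfect matching on $2m$ vertices ($m\geq 2$ disjoint edges), so a vertex cover needs $m$ vertices and $(G,m-1)$ is a \no instance. In your $\L$, inside the first copy the $m$ edge-lines pairwise intersect at external points, so $\lceil m/2\rceil$ external points cover them all, and the same holds in the translated copy; hence all of $\L$ is covered by $2\lceil m/2\rceil\leq m\leq 2(m-1)$ points and $(\L,2(m-1))$ is \yes, breaking equivalence. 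Already $G=2K_2$ with $k=1$ fails: two external points (one per copy) cover all four lines. The ``per-vertex auxiliary lines'' you mention would be needed to repair this, but they are unspecified, and any family of $n$ mandatory auxiliary lines must itself be coverable within the budget $2k$, which is exactly the kind of constraint that derails such patches; so the crux of the reduction is missing rather than merely unpolished.

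The paper performs the doubling at the \emph{graph} level rather than the geometric level, and this is what makes soundness provable. From $G$ it builds $G'$ on vertex set $\{v_0,v_1\mid v\in V(G)\}$ with \emph{all four} edges $\{u_i,v_j\}$, $i,j\in\{0,1\}$, for every edge $\{u,v\}\in E(G)$; a minimal vertex cover of $G'$ uses twins in pairs (since $N_{G'}(v_0)=N_{G'}(v_1)$), giving $\tau(G)\leq k\iff\tau(G')\leq 2k$. Then one geometric copy suffices: place $2n$ points via \autoref{lemma:special_point_set_construction}, one per vertex of $G'$, and take one line per edge of $G'$ (so four lines per original edge). For soundness, a smallest point cover with fewest ``bad'' (non-vertex) points is considered: each bad point privately covers exactly two lines (by the no-three-concurrent property), and an exchange argument using the twin structure (if $\overline{P_{u_0}P_{v_0}}$ is privately covered then $u_0,v_0$ are uncovered, forcing the twin $v_1$ into the cover, and swapping $P_{v_1}$ for $P_{u_0}$ preserves coverage because $N_{G'}(v_0)=N_{G'}(v_1)$) eliminates bad points, yielding a cover by vertex points only and hence a vertex cover of $G'$ of size at most $2k$. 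Your write-up never supplies an analogue of this exchange/charging step, and with your gadget no such step can exist, since the instance it produces is simply two disjoint copies of the naive (and provably inequivalent) construction.
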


\begin{proof}
  Given an instance $(G,k)$ of \VC with $n:=|V(G)|$ and
  $m:=|E(G)|$, we construct an equivalent instance \((\L,2k)\) of
  \LPC in two phases.

  In the first phase we construct a graph $G'$ such that $G'$ has
  a vertex cover of size at most $2k$ if and only if $G$ has a
  vertex cover of size at most $k$. To do this, we first make two
  copies \(G_{0},G_{1}\) of the graph \(G\). For a vertex
  \(v\in{}V(G)\), let \(v_{0},v_{1}\) denote its copies in
  \(G_{0}\) and \(G_{1}\), respectively. For each edge
  \(\{u,v\}\in{}E(G)\), we add the two edges
  \(\{u_{0},v_{1}\},\{u_{1},v_{0}\}\) to \(G'\). This completes
  the construction. Note that there are \emph{four} edges in
  \(G'\) which correspond to each edge in \(G\): these consist of
  the two copies of the edge ``within'' \(G_{0}\) and \(G_{1}\),
  and the two edges ``across'' \(G_{0}\) and \(G_{1}\) added in
  the second step. Formally, we set
  \(V(G')=\{v_{0},v_{1}\mid{}v\in V(G)\}\), and
  \(E(G')=\{\{u_i,v_j\}\mid{}i,j\in\{0,1\}\,\wedge\,\{u,v\}\in
  E(G)\}\); see \autoref{fig:phase1}. As we show in
  \autoref{claim:phase1}, $G'$ has a vertex cover of size at most
  $2k$ if and only if $G$ has a vertex cover of size at most $k$.

  \begin{figure}
    \centering
    \includegraphics[scale=0.25]{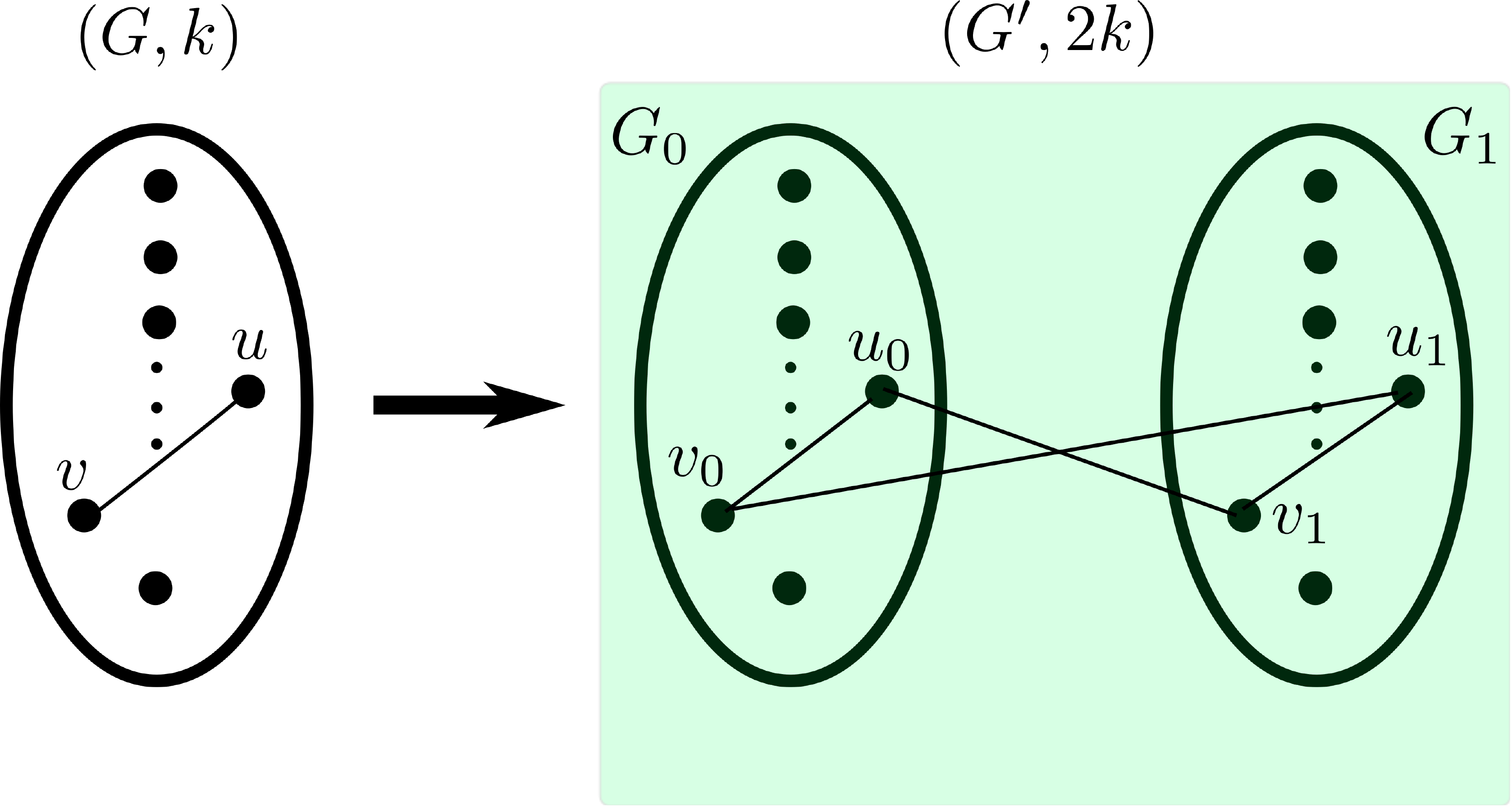}
    \caption{Reduction, phase one.}
    \label{fig:phase1}
  \end{figure}
  
  In the second phase we start with the graph \(G'\) which is the
  output of the first phase, and construct a set \(\L\) of lines
  in the plane. We do this in such a way that there is a set of at
  most \(2k\) points in the plane which cover all the lines in
  \(\L\) if and only if the graph \(G'\) has a vertex cover of
  size at most \(2k\). We start by constructing a set of \(2n\)
  points\footnote{These points will \emph{not} be part of the
    reduced instance.} \(\P=\{p_{1},\ldots,p_{2n}\}\) in the plane
  which has the following properties:
  \begin{enumerate}
  \item The set of points \(\P\) is in general position;
  \item No two lines defined by pairs of points in $\P$ are
    parallel; and,
  \item No three lines defined by pairs of points in $\P$ pass
    through a common point \emph{outside} $\P$.
  \end{enumerate}
  By \autoref{lemma:special_point_set_construction}, we can
  construct the set \(\P\) in time polynomial in \(n\).

  We associate, in an arbitrary fashion, a distinct point
  \(P_{v}\in\P\) with each vertex \(v\in{}V(G')\). We initialize
  \(\L\) to be the empty set. Now for each edge \(\{u,v\}\) of
  \(G'\), we add the line \(L_{uv}=\overline{P_{u}P_{v}}\) to
  \(\L\).  This completes the construction; \((\L,2k)\) is the
  reduced instance of \LPC{}. We now show that this is indeed a
  reduction.

  \medskip\noindent\textbf{Completeness.} Assume that the starting
  instance~$(G,k)$ of \VC is \yes. This
  implies---\autoref{claim:phase1}---that $(G',2k)$ is \yes, and
  it suffices to show that $(\L,2k)$ is \yes for \LPC. Let
  $S\subseteq V(G')$ with $|S|\leq 2k$ be a vertex cover
  of~$G'$. Consider the set~\(\Que=\{P_{v}\mid{}v\in{}S\}\) of at
  most \(2k\) points in the plane. By construction, any line~$L\in
  \L$ is of the form $L=\overline{P_uP_v}$ for some $u,v\in V(G')$
  with $\{u,v\}\in E(G')$. Since \(S\) is a vertex cover of
  \(G'\), we know that \(S\cap\{u,v\}\neq\emptyset\). Hence
  \(\Que\cap\{P_{u},P_{v}\}\neq\emptyset\), and so \(\Que\)
  contains a point which covers the line~$L$. It follows
  that~$\Que$ is a line point cover for~$\L$ of size at most~$2k$
  and that~$(\L,2k)$ is \yes for \LPC.

  \medskip\noindent\textbf{Soundness.} Now suppose $(\L,2k)$ is
  \yes for \LPC. Note that in general, a solution for~$(\L,2k)$
  could contain points which do not belong to the set \(\P\) that
  we used for the construction; these points do not correspond to
  vertices of $G'$. Indeed, any pair of lines in \(\L\) meet at a
  point, but not every pair of edges in~$G'$ share a vertex. We
  show that there exists a solution for $(\L,2k)$ which consists
  entirely of points which correspond to vertices in~$G'$.  We
  start with a \emph{smallest} solution $\Que\,;\,|\Que|\leq2k$ of
  \((\L,2k)\) which has a \emph{minimum number} of points that do
  not correspond to vertices of~$G'$, i.e., with as few
  points~$P\in\Que\setminus\P$ as possible. Let us call all points
  not in~$\P$ \emph{bad points}; points in~$\P$ are \emph{good
    points}.  We define
  \(S(\Que):=\{v\in{}V(G')\mid{}P_{v}\in\Que\}\) as the set of
  vertices of \(G'\) that correspond to good points which are in
  $\Que$.

  Suppose \(\Que\) contains a bad point~$P$. Since \(\Que\) is a
  smallest solution, there is at least one line in \(\L\) which
  (i) is covered by \(P\), and (ii) is not covered by any other
  point in \(\Que\). If there is exactly one such line, say
  $\overline{P_uP_v}$, then we may replace~$P$ by~$P_u$ or~$P_v$
  in \(\Que\) and reduce the number of bad points, a
  contradiction. Now from the third property of the point set
  \(\P\) and from the fact that every line in \(\L\) is defined by
  some pair of points in \(\P\), we get that point~$P$ covers
  \emph{exactly} two lines, say~$L_1,L_2\in \L$, and that these
  are not covered by other points of~$\Que$.

  We now examine the structure around line~$L_1$ more closely. We
  know that~$L_1=\overline{P_{u_i}P_{v_j}}$ for some~$u_i,v_j\in
  V(G')$ with~$i,j\in \{0,1\}$. We assume for the sake of
  convenience that \(i=j=0\), so that
  $L_1=\overline{P_{u_0}P_{v_0}}$; a symmetric argument works when
  \(i=j=1\text{ or }i\neq{}j\). Since~$P$ is the only point
  in~$\Que$ which covers $L_1$, we know that
  $P_{u_0},P_{v_0}\notin \Que$ and hence that $u_0,v_0\notin
  S(\Que)$. We also know from the construction that (i)
  \(\{\overline{P_{u_{i}}P_{v_{j}}}\,;\,i,j\in\{0,1\}\}\subseteq\L\)
  and (ii)
  \(\{\{u_{i},v_{j}\}\,;\,i,j\in\{0,1\}\}\subseteq{}E(G')\). For
  any vertex \(v\in{}V(G')\setminus{}S(\Que)\) the stated minimal
  properties of \(\Que\) imply---see \autoref{claim:phase2}---that
  \(|N_{G'}(v)\setminus{}S(\Que)|\leq1\), and so we get that
  \(N_{G'}(v_{0})\setminus{}S(\Que)=\{u_{0}\}\) and
  \(N_{G'}(u_{0})\setminus{}S(\Que)=\{v_{0}\}\).  It follows that
  \(v_{1}\in{}S(\Que)\) and that \(P_{v_{1}}\) is a good point in
  \(\Que\).

  By construction we have that \(N_{G'}(v_{0})=N_{G'}(v_{1})\),
  and hence that
  \(N_{G'}(v_{1})\setminus{}S(\Que)=\{u_{0}\}\). From the first
  property of the point set \(\P\) we get that the lines in \(\L\)
  which are covered by the point \(P_{v_{1}}\) all correspond to
  edges of \(G'\) which are at \(v_{1}\). It follows that the set
  \(\Que'=(\Que\setminus\{P_{v_{1}}\})\cup\{P_{u_{0}}\}\) covers
  all lines in \(\L\) since it contains the corresponding points $P_w$
  for all neighbors $w$ of $v_1$.  Now (i) \(|\Que'|=|\Que|\), (ii)
  \(P\in\Que'\), (iii) \(\Que'\) has the same number of good
  points as \(\Que\), and (iv) the good points in $\Que'$ cover
  all the lines that were covered by the good points of \(\Que\),
  and in addition they cover
  $L_1=\overline{P_{u_0}P_{v_0}}$. Hence there is at most one line
  in \(\L\)---namely, \(L_{2}\)---which (i) is covered by \(P\),
  and (ii) is not covered by any other point in \(\Que'\). By
  previous arguments this contradicts the minimality of \(\Que'\)
  and hence of \(\Que\).

  It follows that there is a set \(\Que\) of at most \(2k\) points
  which (i) covers all the lines in \(\L\) and (ii) has only good
  points.  We claim that \(S(\Que)\) is a vertex cover of $G'$ of
  size at most $2k$: For any edge~$\{u,v\}\in E(G')$, the
  corresponding line in \(\L\) is covered by a good point
  in~$\Que$, so it is covered by~$P_u$ or~$P_v$. Thus $S(\Que)$
  contains~$u$ or~$v$, and \(|S(\Que)|=|\Que|\leq2k\).  It follows
  that~$(G',2k)$ and hence also~$(G,k)$ are \yes for \VC.

  \medskip\noindent\textbf{Wrap-up.} It is not difficult to see
  that the first phase of the reduction can be done in polynomial
  time, and we have argued that the second phase can also be done
  in polynomial time.  
\end{proof}

We now prove the claims which we use in the proof of
\autoref{lemma:reduction}.

\begin{claim}\label{claim:phase1}
  Let \(G\) be any graph, and let \(G'\) be the graph obtained
  from \(G\) by applying the construction from the first phase of
  the reduction in the proof of \autoref{lemma:reduction}. Then
  \(G\) has a vertex cover of size at most \(k\) if and only if
  \(G'\) has a vertex cover of size at most \(2k\).
\end{claim}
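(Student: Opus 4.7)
The plan is to prove the biconditional by handling each direction separately. The forward direction is a routine doubling construction; the backward direction requires a small structural observation about each edge gadget in~$G'$.

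For the forward direction, I would take a vertex cover $S$ of $G$ with $|S|\leq k$ and form the natural doubling $S'=\{v_{0},v_{1}\mid v\in S\}$, which has size $2|S|\leq 2k$. For any edge $\{u_{i},v_{j}\}\in E(G')$ with $i,j\in\{0,1\}$, the construction guarantees $\{u,v\}\in E(G)$; since $S$ covers $\{u,v\}$, one of $u_{i},v_{j}$ lies in~$S'$, so $S'$ is a vertex cover of $G'$.

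For the backward direction, the key observation is structural: for each edge $\{u,v\}\in E(G)$, the construction introduces all four edges on $\{u_{0},u_{1},v_{0},v_{1}\}$ of the form $\{u_{i},v_{j}\}$, and these form a $K_{2,2}$ with parts $\{u_{0},u_{1}\}$ and $\{v_{0},v_{1}\}$ (note that no edges $\{u_{0},u_{1}\}$ or $\{v_{0},v_{1}\}$ are introduced). Given any vertex cover $S'$ of $G'$, I would argue that for every such edge of $G$, either $\{u_{0},u_{1}\}\subseteq S'$ or $\{v_{0},v_{1}\}\subseteq S'$: if $u_{0}\notin S'$, then covering both $\{u_{0},v_{0}\}$ and $\{u_{0},v_{1}\}$ forces $v_{0},v_{1}\in S'$; the case $u_{1}\notin S'$ is symmetric.

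I would then define $S=\{v\in V(G)\mid \{v_{0},v_{1}\}\subseteq S'\}$. The observation above immediately gives that $S$ covers every edge of $G$, and since every vertex of $S$ contributes two elements to~$S'$, we have $2|S|\leq|S'|\leq 2k$, hence $|S|\leq k$. The only nontrivial point is the ``both copies'' dichotomy established in the backward direction; everything else reduces directly to the definitions.
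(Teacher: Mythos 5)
Your proof is correct. The forward direction is identical in substance to the paper's (natural doubling of the cover). The backward direction, however, takes a genuinely different and somewhat cleaner route. The paper starts from an \emph{inclusion-minimal} vertex cover $S'$ of $G'$ and argues globally, per vertex: using minimality and the fact that $N_{G'}(u_0)=N_{G'}(u_1)$, it shows that for every $u\in V(G)$ either both copies $u_0,u_1$ lie in $S'$ or neither does, so $S'$ partitions into pairs and $S$ is read off from that partition. You instead argue locally, per edge of $G$: for each $\{u,v\}\in E(G)$ the four edges on $\{u_0,u_1,v_0,v_1\}$ form a $K_{2,2}$, which forces either $\{u_0,u_1\}\subseteq S'$ or $\{v_0,v_1\}\subseteq S'$. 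Defining $S$ as the set of vertices with both copies in $S'$, the covering property is immediate from this dichotomy, and the size bound $2|S|\le|S'|\le 2k$ follows because each $v\in S$ contributes two distinct elements of $S'$. Your version buys two small but real simplifications: it needs no minimality assumption (it works for an arbitrary cover $S'$), and it does not require $S'$ to partition perfectly into pairs --- stray singletons in $S'$ simply waste budget, which the inequality absorbs. Both arguments are sound; yours is the more elementary of the two.
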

\begin{proof}
  For the forward direction, let \(S\subseteq{}V(G)\) be a vertex
  cover of \(G\), of size at most \(k\). Then
  \(S'=\{v_{0},v_{1}\mid{}v\in{}S\}\) is a vertex cover of \(G'\),
  of size at most \(2k\). Since \(S'\) contains exactly two
  vertices for each vertex of \(S\), we have that
  \(|S'|=2|S|\leq{}2k\). To see that \(S'\) is a vertex cover of
  \(G'\), assume for the sake of contradiction that an edge
  \(\{x,y\}\) of \(G'\) is not covered by \(S'\). There are four
  cases to consider, which reduce by symmetry to the following
  two.
  \begin{enumerate}
  \item Both end points of the edge \(\{x,y\}\) belong to the
    subgraph \(G_{0}\) of \(G\). (A symmetric argument holds when
    both \(x\) and \(y\) belong to the subgraph \(G_{1}\).) Then
    from the construction, there is an edge \(\{u,v\}\in{}E(G)\)
    such that \(x=u_{0},y=v_{0}\). Since neither \(x\) nor \(y\)
    is in \(S'\), it follows from the construction of \(S'\) that
    neither of \(\{u,v\}\) is in \(S\).
  \item Vertices \(x\) and \(y\) belong to two distinct subgraphs
    \(G_{i};i\in\{0,1\}\). Without loss of generality, let
    \(x\in{}V(G_{0}),y\in{}V(G_{1})\). Observe that the
    construction adds such an edge \(\{x,y\}\) to \(G'\) exactly
    when (i) there is an edge \(\{u,v\}\in{}E(G)\), and (ii)
    \((x=u_{0},y=v_{1})\vee(x=v_{0},y=u_{1})\). In either case,
    since neither \(x\) nor \(y\) is in \(S'\), it follows from
    the construction of \(S'\) from \(S\) that neither of
    \(\{u,v\}\) is in \(S\).
  \end{enumerate}
  In each case, \(S\) does not cover the edge \(\{u,v\}\) in
  \(G\), a contradiction.
  
  For the reverse direction, let \(S'\subseteq{}V(G')\) be an
  \emph{inclusion-minimal} vertex cover of \(G'\), of size at most
  \(2k\). If \(G'\) has a vertex cover of size at most \(2k\),
  then it certainly has a minimal vertex cover of size at most \(2k\). We first show
  that for any vertex \(u\in{}V(G)\), either (i) both the copies
  \(u_{0},u_{1}\) are in \(S'\), or (ii) neither copy is in
  \(S'\). Suppose not, and exactly one of these, say \(u_{0}\), is
  in \(S'\). Since \(S'\) is a minimal vertex cover, \(u_{0}\)
  covers at least one edge, and so the neighborhood \(N(u_{0})\)
  of \(u_{0}\) is non-empty. Now, from the construction of \(G'\)
  we get that \(N(u_{0})=N(u_{1})\), and since
  \(u_{1}\notin{}S'\), it follows that \emph{all} vertices in
  \(N(u_{1})=N(u_{0})\) must be in \(S'\). But then
  \(S'\setminus\{u_{0}\}\) is a strictly smaller vertex cover of
  \(G'\), a contradiction.
  
  Thus \(S'\) can be partitioned into pairs of vertices
  \(\{\{u_{0},u_{1}\}\mid{}u\in{}S\subseteq{}V(G)\}\). We claim
  that \(S\) is a vertex cover of graph \(G\), of size at most
  \(k\). The claim on size follows immediately from the
  construction of \(S\). To see that \(S\) is a vertex cover of
  \(G\), observe that if \(\{x,y\}\in{}E(G)\) and
  \(\{x,y\}\cap{}S=\emptyset\), then (i)
  \(\{x_{0},x_{1},y_{0},y_{1}\}\cap{}S'=\emptyset\) by the
  construction of \(S\) and (ii) all the four edges
  \(\{\{x_{i},y_{i}\}\mid{}i\in\{0,1\}\}\) are in \(E(G')\) by the
  construction of \(G'\), which together contradict the assumption
  that \(S'\) is a vertex cover of \(G'\).
\end{proof}

\begin{claim}\label{claim:phase2}
  Let \((\L,2k)\) be an instance of \LPC constructed as in the
  second phase of the reduction in the proof of
  \autoref{lemma:reduction}. Let \(\P,G',P_{v}\) be as in the
  construction, and let \(\Que\) be a smallest set of points which
  (i) cover all the lines in \(\L\) and (ii) has as few points
  $P\in\Que\setminus\P$ as possible. Let
  \(S(\Que):=\{v\in{}V(G')\mid{}P_{v}\in\Que\}\). Then for any
  vertex \(v\in{}V(G')\setminus{}S(\Que)\) we have that
  \(|N_{G'}(v)\setminus{}S(\Que)|\leq1\).
\end{claim}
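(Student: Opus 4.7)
The plan is a proof by contradiction that produces a local swap violating the minimality of $\Que$. Suppose some $v \in V(G') \setminus S(\Que)$ has two distinct neighbors $u_1, u_2 \in N_{G'}(v) \setminus S(\Que)$. Setting $L_i := \overline{P_v P_{u_i}} \in \L$ for $i \in \{1,2\}$, the first property of $\P$ (general position) guarantees that the only good points on $L_i$ are $P_v$ and $P_{u_i}$, and by hypothesis neither is in $\Que$. Hence each $L_i$ is covered by some bad point $B_i \in \Que \setminus \P$, and $B_1 \neq B_2$, because any common bad point would have to lie in $L_1 \cap L_2 = \{P_v\}$, which is good.

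The next ingredient is a structural fact that is already implicit in the soundness argument of \autoref{lemma:reduction}: each bad point of $\Que$ uniquely covers \emph{exactly} two lines of $\L$. The third property of $\P$ bounds the number of lines of $\L$ through a bad point by two; and if a bad point $B$ uniquely covered only one line $L = \overline{P_a P_b}$, then $P_a, P_b \notin \Que$ (else the uniqueness fails), so replacing $B$ by $P_a$ would give a cover of the same size with one fewer bad point, contradicting the minimality of bad points in $\Que$. Applying this to $B_1, B_2$ produces second lines $L_1', L_2' \in \L$ that they respectively uniquely cover. A routine check then rules out all coincidences among $L_1, L_2, L_1', L_2'$: for example $L_1' = L_2$ would put $B_1$ on a line uniquely covered by $B_2$, forcing $B_1 = B_2$; the remaining equalities fail analogously.

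By the second property of $\P$ the lines $L_1'$ and $L_2'$ are not parallel, so they meet at a unique point $Z$, and the core of the plan is a case split on whether $Z \in \P$. If $Z = P_z$ for some $z$, then $P_z \notin \Que$ (otherwise $B_1$ would not uniquely cover $L_1'$), and I would replace $\{B_1, B_2\}$ in $\Que$ by $\{P_v, P_z\}$; the resulting set has the same size as $\Que$, two fewer bad points, and still covers $\L$, because $P_v$ covers $L_1, L_2$, $P_z$ covers $L_1', L_2'$, and by the uniqueness above $B_1, B_2$ did not help cover any other line. If instead $Z \notin \P$, then $Z$ itself is a bad point with $Z \notin \Que$, and the third property of $\P$ forces the only lines of $\L$ through $Z$ to be precisely $L_1'$ and $L_2'$; the analogous swap $\{B_1, B_2\} \mapsto \{P_v, Z\}$ then has the same size and one fewer bad point while still covering $\L$. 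Either outcome contradicts the minimality of bad points in $\Que$ and proves the claim.

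The main obstacle is conceptual rather than computational: one has to notice that the right object to exploit is the intersection of the two ``other'' lines $L_1', L_2'$, and that the three carefully engineered properties of $\P$ combine to make this intersection well-defined, incident to at most the two relevant lines of $\L$, and absorbable by a single added point (good or bad) that together with $P_v$ fully takes over the coverage previously provided by $B_1$ and $B_2$.
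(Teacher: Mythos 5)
Your proof is correct and takes essentially the same route as the paper's: you establish that each bad point of \(\Que\) covers exactly two lines of \(\L\), both uniquely (via property 3 of \(\P\) together with the two minimality conditions), and then swap \(B_1,B_2\) for \(P_v\) and the intersection point of \(L_1',L_2'\), which lowers the number of bad points and yields the contradiction—the paper handles your two cases (\(Z\in\P\) versus \(Z\notin\P\)) in one stroke by observing that at most one of the two replacement points is bad. Your additional checks (distinctness of the four lines, non-parallelism of \(L_1',L_2'\) via property 2) merely make explicit what the paper leaves implicit; the only step you leave tacit is the case of a bad point uniquely covering no line at all, which is immediately excluded since deleting it would contradict \(|\Que|\) being smallest.
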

\begin{proof}
  We reuse the notation and terminology defined in the proof of
  \autoref{lemma:reduction}. Suppose a vertex
  \(v\in{}V(G')\setminus{}S(\Que)\) has two neighbours \(u,w\)
  which are not in \(S(\Que)\). Then the lines
  \(L_{1}=\overline{P_{v}P_{u}}\) and
  \(L_{2}=\overline{P_{v}P_{w}}\) which intersect at the good
  point \(P_{v}\) are covered by two bad points---say
  \(P_{1},P_{2}\), respectively---in \(\Que\). By an argument
  presented in the proof of \autoref{lemma:reduction} we know that
  each of \(P_{1},P_{2}\) covers exactly two lines in \(\L\), and
  that these lines are not covered by any other point in
  \(\Que\). Let \(L_{i},L_{i}'\) be the two lines covered by
  \(P_{i}\) for \(i\in\{1,2\}\), and let \(P'\) be the
  intersection of \(L_{1}'\) and \(L_{2}'\). Thus the two bad
  points \(P_{1},P_{2}\) together cover the subset
  \(\{L_{i},L_{i}'\,\vert\,i\in\{1,2\}\}\) of \(\L\), but so do
  the two points \(P_{v},P'\) of which at most one is a bad
  point. Thus \((\Que\setminus\{P_{1},P_{2}\})\cup\{P_{v},P'\}\)
  is a smallest set of points which (i) cover all the lines in
  \(\L\) and (ii) has strictly fewer bad points than \(\Que\), a
  contradiction.
\end{proof}

Using \autoref{lemma:reduction} we can prove a stronger statement
than \autoref{theorem:plc:sizebound}. 

\begin{theorem}\label{theorem:plc:oraclecostbound}
  Let~$\varepsilon>0$. The \PLC problem admits no oracle
  communication protocol of cost $\Oh(k^{2-\varepsilon})$ for
  deciding instances~$(\P,k)$, unless \caveat.
\end{theorem}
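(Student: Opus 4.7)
The plan is to derive Theorem~\ref{theorem:plc:oraclecostbound} by chaining the reductions already in hand with the known oracle communication lower bound for \VC. I would proceed by contradiction: suppose that, for some fixed $\varepsilon>0$, \PLC admits an oracle communication protocol $\Pi$ of cost $\Oh(k^{2-\varepsilon})$. My goal is to build an oracle communication protocol of cost $\Oh(k^{2-\varepsilon})$ for \VC, which by the corollary of \autoref{theorem:vc_protocol_lower_bound} that was stated just after it forces \caveat.

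The construction of the \VC protocol is as follows. Given an input $(G,k)$, the first player (who runs in time polynomial in the input) first applies the reduction of \autoref{lemma:reduction} to obtain an equivalent \LPC instance $(\L,2k)$, and then composes it with the reduction of \autoref{lem:lpc_to_plc} to obtain an equivalent \PLC instance $(\P,2k)$. Both steps are polynomial-time, so the first player can perform them on her own. She then simulates $\Pi$ on the instance $(\P,2k)$: whenever $\Pi$ requires the first player to send a bit to the oracle, our first player sends that bit; whenever $\Pi$ requires the oracle to send bits back, our oracle does the same. At the end of the simulation $\Pi$ correctly decides whether $(\P,2k)\in\PLC$, and by the equivalence guaranteed by the two reductions this tells her whether $(G,k)\in\VC$.

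It remains to bound the cost. The number of bits the first player sends to the oracle during the simulation equals the cost of $\Pi$ on input $(\P,2k)$, which is $\Oh((2k)^{2-\varepsilon})=\Oh(k^{2-\varepsilon})$. This is a valid oracle communication protocol for \VC of cost $\Oh(k^{2-\varepsilon})$, contradicting the stated corollary of \autoref{theorem:vc_protocol_lower_bound} unless \caveat.

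There is no genuine obstacle in this proof; all of the work is in the two preceding lemmas and in the theorem of Dell and van Melkebeek. The only subtlety worth highlighting in the writeup is that the parameter grows only by a constant factor (here, doubling), so that a bound of the form $\Oh(k^{2-\varepsilon})$ pulls back from \PLC to \VC without changing the exponent; and that a \PLC oracle protocol can be invoked as a subroutine inside a \VC oracle protocol because the reductions are computable by the first player in polynomial time without any communication with the oracle.
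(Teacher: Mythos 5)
Your proof is correct and matches the paper's own argument: both chain the polynomial-time reductions \VC~$\to$~\LPC~$\to$~\PLC (with the parameter growing only by a constant factor, $k\mapsto 2k$), and then invoke the Dell--van Melkebeek lower bound via the stated corollary of \autoref{theorem:vc_protocol_lower_bound}. The observation that a constant-factor blowup of $k$ leaves the exponent $2-\varepsilon$ unchanged, and that the first player can run the reductions locally before simulating the \PLC protocol, is exactly the substance of the paper's proof.
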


\begin{proof}
  Suppose \PLC admits an oracle communication protocol of cost
  $\Oh(k^{2-\varepsilon})$ that decides any instance
  $(\P,k)$. \autoref{lem:lpc_to_plc} then yields a protocol of the
  same cost for \LPC. Combining this with the reduction from \VC
  to \LPC from \autoref{lemma:reduction} which has a linear
  parameter increase (i.e., $k\mapsto 2k$), we get a protocol of
  cost $\Oh(k^{2-\varepsilon})$ for deciding \VC
  instances~$(G,k)$.  \autoref{theorem:vc_protocol_lower_bound}
  now implies \caveat, as claimed.
\end{proof}

\autoref{theorem:plc:sizebound} is now immediate.
\begin{proof}[Proof of \autoref{theorem:plc:sizebound}]
  If \PLC has a kernel of size $\Oh(k^{2-\varepsilon})$, then the
  polynomially-bounded first player in the oracle communication
  protocol could compute this kernel and send it to the second
  player who, being computationally unbounded, can compute and
  return the correct one-bit answer (\yes or \no). The cost of
  this protocol is $\Oh(k^{2-\varepsilon})$, and by
  \autoref{theorem:plc:oraclecostbound} this implies \caveat.
\end{proof}

\section{Lower bound on the number of points}\label{section:pointlowerbound}

In this section we prove our main result, namely that the
straightforward reduction of \PLC instances~$(\P,k)$ to~$k^2$
points cannot be significantly improved. Recall
that, unlike for \VC, we do not know of an efficient encoding of
\PLC to instances of near linear size in the number of points. In
the case of \VC it is straightforward to encode instances with~$m$
edges using $\Oh(m \log m)$ bits. Together with the known
\(\Oh(k^{2-\varepsilon})\) lower bound on kernel
size~\cite{DellvanMelkebeek2010}, this implies
an~$\Oh(k^{2-\varepsilon})$ lower bound on the number of
\emph{edges} for \VC kernels. Obtaining an efficient encoding of
\PLC to instances of near linear size in the number of points is
an old open problem in computational geometry~\cite{GPS89}. The
following lemma gets around this handicap by providing an oracle
communication protocol of cost near linear in the number of
points.

\begin{lemma}\label{lemma:plc:oracleprotocol}
  There is an oracle communication protocol of cost~$\Oh(n\log n)$
  for deciding instances of \PLC with~$n$ points.
\end{lemma}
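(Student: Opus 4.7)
The plan is to leverage three facts already established in the paper. First, by Lemma \ref{lemma:comb_equiv_plc}, the answer to a \PLC instance $(\P,k)$ depends only on $k$ and on the order type $\sigma$ of $\P$, not on the specific coordinates; thus up to equivalence the instance is fully described by the pair $(\sigma,k)$. Second, by Lemma \ref{lem:otub} the number $N$ of distinct order types on $n$ points satisfies $N = n^{\Oh(n)} = 2^{\Oh(n\log n)}$. Third, Lemma \ref{lem:enumerate_order_types} gives an (inefficient, but terminating) algorithm that produces every order type of $n$ points. Together these let the oracle perform a binary search over order types while the first player, who runs in polynomial time, only has to evaluate and compare its own order type.

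The protocol proceeds as follows. If $k \geq \lceil n/2\rceil$ then $(\P,k)$ is trivially a \yes-instance (pair up the points and use one line per pair), so Player~1 decides locally without any communication. Otherwise $k < n$, and Player~1 first sends $k$ using $\Oh(\log n)$ bits. Then Player~2, who is computationally unbounded, runs the enumeration algorithm of Lemma \ref{lem:enumerate_order_types} to obtain the full list of order types on $n$ points, sorts it in a canonical (say lexicographic) order, and performs binary search to identify the order type $\sigma_{\mathrm{in}}$ of $\P$. In each of $\lceil \log_2 N\rceil = \Oh(n\log n)$ rounds, Player~2 sends a candidate midpoint order type $\sigma_j$, transmitted explicitly as a table of $\binom{n}{3}$ values in $\{-1,0,+1\}$; these Player-2-to-Player-1 messages are free by Definition \ref{definition:oracleprotocol}. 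Player~1 computes $\sigma_{\mathrm{in}}$ from the input coordinates --- each entry is the sign of a $3\times 3$ determinant, computable in polynomial time --- compares $\sigma_{\mathrm{in}}$ to $\sigma_j$ lexicographically, and replies with a single bit indicating whether $\sigma_{\mathrm{in}}$ precedes $\sigma_j$. After $\Oh(n\log n)$ rounds Player~2 has pinned down $\sigma_{\mathrm{in}}$ exactly; using unbounded computation it decides whether a point set of this order type admits a cover by $k$ lines (well-defined by Lemma \ref{lemma:comb_equiv_plc}, and realizable e.g.\ by picking a rational point in each cell produced by the enumeration algorithm) and returns the one-bit answer, which Player~1 outputs.

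The total number of bits sent from Player~1 to Player~2 is $\Oh(\log n) + \Oh(n\log n)\cdot 1 = \Oh(n\log n)$. Player~1's running time is polynomial: computing $\sigma_{\mathrm{in}}$ takes $\Oh(n^3)$ determinant evaluations over rationals of polynomial bit length, each round requires only a componentwise comparison of tables of size $\Oh(n^3)$, and the number of rounds is itself $\Oh(n\log n)$.

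The main obstacle to guard against is that everything done on Player~2's side is astronomically expensive: the messages sent to Player~1 have length $\Theta(n^3)$, and enumerating and sorting $2^{\Oh(n\log n)}$ order types is certainly not polynomial. Neither harms the protocol, because only Player-1-to-Player-2 bits count towards cost, Player~2 is explicitly allowed to be unbounded, and Player~1's polynomial-time budget easily absorbs reading and comparing polynomially many messages of polynomial size. A minor point is that both players must agree on the canonical order used by the binary search; this is no issue, since the ordering can be the lexicographic order on the tables $\sigma\colon\binom{[n]}{3}\to\{-1,0,+1\}$, which is defined independently of how the enumeration algorithm lists its output.
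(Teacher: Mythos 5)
Your protocol is correct and is essentially the paper's own proof: the unbounded player enumerates all $n^{\Oh(n)}$ order types via \autoref{lem:enumerate_order_types}, sorts them, and binary-searches with the polynomial-time player answering lexicographic comparisons, for cost $\Oh(n\log n)$ (the paper has the oracle return the optimum value rather than receiving $k$, an immaterial variation). The only nit is that Player~2 must also learn $n$ before it can enumerate order types of $n$ points (the oracle sees no part of the input), which costs an additional $\Oh(\log n)$ bits as in the paper and does not affect the bound; also, your parenthetical about realizing the order type by rational points in cells is unnecessary and dubious (some order types are not rationally realizable) --- well-definedness via \autoref{lemma:comb_equiv_plc} and computability by the unbounded player suffice, as in the paper.
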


\begin{proof}
  We describe the claimed oracle communication protocol for
  deciding \PLC instances. The polynomially-bounded player holding
  the input is called Alice, and the computationally unbounded
  player is called Bob. Recall that by
  \autoref{definition:oracleprotocol} the cost of a protocol is
  the number of bits \emph{sent from Alice to Bob}; in contrast,
  Bob can sent any amount of information to Alice (who, however,
  has only polynomial time in the input size for reading it).
  
  Alice and Bob both use the following scheme to represent order
  types as strings over the alphabet \(\{+1,0,-1\}\).  Recall that
  the order type of an ordered set of \(n\) points
  \(\P=\langle1,\dotsc,n\rangle\) is a certain function
  \(\sigma:\binom{[n]}{3}\mapsto\{+1,0,-1\}\). To form the string
  representing \(\sigma\), we first arrange the set
  \(\binom{[n]}{3}\) in increasing lexicographic order to get a
  list \(\L\). Then we replace each \(x\in\L\) by
  \(\sigma(x)\). This gives us the desired string; we denote it
  the Order Type Representation of the ordered set, or OTR for
  short. Observe that each OTR can be encoded using \(\Oh(n^{3})\)
  bits.

  From \autoref{lem:otub} we know that the number of
  combinatorially distinct order types of \(n\)-point sets is
  $n^{O(n)}$. Since for each order type there are at most $n!$
  other order types combinatorially equivalent to it, the number
  of different order types of \(n\)-point sets is at most
  $n!\cdot{}n^{O(n)}=n^{O(n)}$. This indicates that the pertinent
  information which Alice holds is $\Oh(n\log n)$ bits.  But we do
  not know of a polynomial-time procedure which encodes this
  information into these many bits, and so Alice cannot just
  employ such an encoding and send the result to Bob for him to
  solve the instance.  We use the following protocol to get around
  this problem.
  
  \begin{enumerate}
    \item Alice sends the value \(n\) of the number of points in
      the input set to Bob in binary encoding. 
    \item Alice fixes an arbitrary ordering of the input point
      set. She then computes the OTR of this ordered set.
    \item Bob generates a list of all $n^{O(n)}$ possible order
      types; by \autoref{lem:enumerate_order_types} this is a
      computable task. He then computes the OTRs of these order
      types and sorts them in lexicographically increasing order.
    \item Alice and Bob now engage in a conversation where Bob
      uses binary search on the sorted list to locate the OTR
      which Alice holds. Bob sends the median OTR \(M\) in his
      list to Alice. Alice replies, in two bits, whether the OTR
      she holds is smaller, equal to, or larger than \(M\) in
      lexicographic order. If the answer is not ``equal'', Bob
      prunes his list accordingly, throwing out all OTRs which
      cannot be the one held by Alice. By repeating this procedure
      $\Oh(\log (n^{O(n)}))=\Oh(n\log n)$ times, Bob is left with
      a single OTR \(S\) which is identical to the one held by Alice.  
    \item Bob now computes the size of a smallest point-line cover
      of any point set which has the order type \(S\), and sends
      this number to Alice. Alice compares this number with the
      input \(k\) and answers \yes or \no accordingly. 
  \end{enumerate}
  
  It is not difficult to see that Alice can do her part of this
  procedure in polynomial time, and that all tasks which Bob has
  to do are computable. The total cost of the protocol is
  \(\log{}n+\Oh(n\log{}n)=\Oh(n\log{}n)\), as claimed. 
\end{proof}

This lemma and the kernel on \(\Oh(k^{2})\) points together imply
an oracle communication protocol for \PLC that matches the lower
bound from \autoref{theorem:plc:oraclecostbound} up to~$k^{o(1)}$
factors. This suggests that a kernelization or compression to bit
size~$\Oh(k^{2+o(1)})$ may be possible; at least it is impossible
to get better lower bounds via oracle communication protocols.

\begin{corollary}
  The \PLC problem has an oracle communication protocol of cost
  $\Oh(k^2\log k)$ for deciding instances $(\P,k)$.
\end{corollary}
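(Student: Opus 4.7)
The plan is to simply compose the easy $k^2$-point kernelization with the oracle communication protocol from \autoref{lemma:plc:oracleprotocol}. Concretely, Alice first runs, in polynomial time, the reduction described at the start of \autoref{sec:intro}: she repeatedly identifies any line through more than $k$ input points, declares it mandatory, deletes it together with all points on it, and decrements $k$. When this process halts she either has already concluded \no (if more than $k^2$ points remain) or holds an equivalent instance $(\P',k')$ with $k'\leq k$ and $|\P'|\leq (k')^2\leq k^2$.

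Next, Alice and Bob execute the protocol of \autoref{lemma:plc:oracleprotocol} on $(\P',k')$. Setting $n:=|\P'|\leq k^2$, that protocol has cost $\Oh(n\log n)=\Oh(k^2\log(k^2))=\Oh(k^2\log k)$ bits sent from Alice to Bob, and at its conclusion Alice correctly decides whether $(\P',k')$---and hence the original instance $(\P,k)$---is a \yes-instance. Since the kernelization phase requires no communication at all, the total cost of the combined protocol is $\Oh(k^2\log k)$.

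There is essentially no obstacle: correctness follows from correctness of the reduction rules together with \autoref{lemma:plc:oracleprotocol}, and Alice's side of both phases runs in polynomial time in the input size. The only small caveat is that the kernelization may be vacuous when $(\P,k)$ is immediately rejected (no communication needed) or when $k'<k$ (which only improves the bound); in all cases the stated cost bound holds.
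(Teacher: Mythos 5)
Your proof is correct and matches the paper's intended argument exactly: compose the standard $k^2$-point kernelization with the $\Oh(n\log n)$-cost protocol of \autoref{lemma:plc:oracleprotocol}, noting that the kernelization is local to Alice and costs nothing in communication. The paper presents this only as a one-line remark preceding the corollary; your write-up spells out the same reasoning in more detail, including the harmless corner cases.
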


Now, using the protocol from \autoref{lemma:plc:oracleprotocol} in
place of an efficient encoding, it is straightforward to complete
the claimed lower bound of~$\Oh(k^{2-\varepsilon})$ on the number
of points in a kernel.

\begin{proof}[Proof of Theorem~\ref{theorem:plc:pointlowerbound}]
  By \autoref{lemma:plc:oracleprotocol}, such a kernelization
  would directly give an oracle communication protocol for \PLC of
  cost~$\Oh(k^{2-\varepsilon'})$: Given an instance~$(\P,k)$,
  Alice applies the (polynomial-time) kernelization that generates
  an equivalent instance with~$\Oh(k^{2-\varepsilon})$
  points. Then she proceeds by using the protocol from the proof
  of \autoref{lemma:plc:oracleprotocol}.

  As we already showed in \autoref{theorem:plc:oraclecostbound},
  there is no~$\Oh(k^{2-\varepsilon'})$ protocol for \PLC for
  any~$\varepsilon'>0$, unless \caveat. This completes the proof.
\end{proof}

\section{Conclusion}\label{sec:conclusion}

We took up the question of whether the known simple reduction of
\PLC instances $(\P,k)$ to equivalent instances with $k^2$ points
can be significantly improved. This has been posed as an open
problem by Lokshtanov in his PhD
Thesis~\cite{LokshtanovThesis2009} and also at the open problem
sessions of various meetings of the Parameterized Algorithms
community. Our main result, \autoref{theorem:plc:pointlowerbound},
answers Lokshtanov's question in the negative.

Along the way, we proved that no polynomial-time reduction to
\emph{size} $\Oh(k^{2-\varepsilon})$ bits is possible either. Now,
starting with the reduction to $k^2$ points, one can encode the
order type of the reduced instance using \emph{lambda
  matrices}~\cite{GoodmanPollack1991} into $\Oh(k^4\log k)$
bits. Let us recall however, that our lower bound (like all lower
bounds via the same framework~\cite{DellvanMelkebeek2010}) holds
also for oracle communication protocols. Since we devise a
protocol of cost~$\Oh(k^2\log k)$, our lower bound seems to be the
best possible with these methods. We pose as an open problem
whether the gap between the upper and lower bound in the total
size of a reduced instance can be closed; we expect that the
``correct bound'' is~$\tilde\Oh(k^2)$ bits.

We briefly mention a variant of the \textsc{Set Cover} problem
where sets are restricted to have pairwise intersections of
cardinality at most one ($1$-ISC)~\cite{KumarAryaRamesh2000}, of
which \PLC is a special case. It is not hard to see that the
reduction to $k^2$ points carries over directly to a reduction of
the ground set of such a $1$-ISC instance to $k^2$ elements if we
are looking for a set cover of size most $k$.  This also gives a
polynomial kernelization of the problem of size~$\Oh(k^5\log k)$
bits since there are at most~$\binom{k^2}{2}$ sets (each pair of
ground set elements is in at most one of them), each containing at
most~$k$ elements whose identity can be encoded in~$\Oh(\log k)$
bits. (A more careful analysis yields an upper bound of
\(\Oh(k^4\log k)\) bits.)  The lower bound for the cost of oracle
communication protocols for PLC carries over to this problem as
well, and so $1$-ISC has no such protocol or kernelization of
cost/size $\Oh(k^{2-\varepsilon})$ for any $\varepsilon>0$ unless
\caveat. However, we do \emph{not} know of a protocol for deciding
$1$-ISC instances with ground set of $n$ elements at cost
$\Oh(n^{1+o(1)})$, and so we do not have a bound for the size of
the ground set. Thus, unlike for PLC, we also have no protocol of
cost $\Oh(k^2\log k)$ for \(1\)-ISC to (essentially) match the
lower bound for protocols. Note that the lower bound on the number
of points does not transfer from PLC to (the ground set size of)
$1$-ISC since using a reduction for $1$-ISC on a PLC instance does
not necessarily result in a PLC instance. We ask whether the gap
between upper and lower bounds for kernels for $1$-ISC can be
closed, and whether, perhaps similar to our protocol for PLC, a
tight lower bound for the ground set elements can be proven.
\newpage
\bibliographystyle{plain}
\bibliography{pointslines}

\end{document}
